\newcommand{\intd}{\mbox{d}}
\newcommand{\xvec}{x}
\newcommand{\zvec}{z}
\DeclareMathOperator*{\argmin}{arg\,min}
\newtheorem{theorem}{Theorem}[section]
\newtheorem{proposition}{Proposition}[section]
\newtheorem{lemma}{Lemma}[section]
\begin{document}

\begin{frontmatter}

\title{Non-Homogeneous Poisson Process Intensity Modeling and Estimation using Measure Transport}

\begin{aug}
  \author{\fnms{Tin Lok James}  \snm{Ng}},
  \and
  \author{\fnms{Andrew}  \snm{Zammit-Mangion}}%

  \runauthor{Ng, T.L.J. and Zammit-Mangion, A.}

  \affiliation{School of Mathematics and Applied Statistics, University of Wollongong, Australia}

  \address{School of Mathematics and Applied Statistics, University of Wollongong, Australia}


\end{aug}

\date{}
\maketitle

\begin{abstract}
Non-homogeneous Poisson processes are used in a wide range of scientific disciplines, ranging from the environmental sciences to the health sciences. Often, the central object of interest in a point process is the underlying intensity function. Here, we present a general model for the intensity function of a non-homogeneous Poisson process using measure transport. The model is built from a flexible bijective mapping that maps from the underlying intensity function of interest to a simpler reference intensity function. We enforce bijectivity by modeling the map as a composition of multiple bijective maps that have increasing triangular structure, and show that the model exhibits an important  approximation property. Estimation of the flexible mapping is accomplished within an optimization framework, wherein computations are efficiently done using tools originally designed to facilitate deep learning, and a graphics processing unit. Point process simulation and uncertainty quantification are straightforward to do with the proposed model. We demonstrate the potential benefits of our proposed method over conventional approaches to intensity modeling through various simulation studies. We also illustrate the use of our model on a real data set containing the locations of seismic events near Fiji since 1964.
 

\end{abstract}

\begin{keyword}
\kwd{Poisson Point Process}
\kwd{Intensity Estimation}
\kwd{Measure Transport}
\kwd{Deep Neural Network}
\end{keyword}

\end{frontmatter}

\section{Introduction}
A non-homogeneous Poisson process (NHPP) is a Poisson point process that has variable intensity in the domain on which it is defined.
NHPPs are commonly used in a wide range of applications, for example when modeling failures of repairable systems \citep{Lindqvist2006}, earthquake occurrence \citep{Hong1995}, or the evolution of customer purchase behavior \citep{Letham2016}. 
\\\\
A NHPP defined on ${\cal S} \subset \mathbb{R}^{d}$ can be fully characterized through its intensity function $\lambda: {\cal S} \rightarrow [0, \infty)$.  The intensity function is usually of considerable scientific interest, and both parametric and nonparametric methods have been proposed to model it. A parametric approach assumes that the intensity function has a known parametric form, and that the model parameters can be estimated using, for example, likelihood-based methods \citep[e.g.,][]{Zhao1996}. The specified functional form is, however, often too restrictive an assumption in practice. Non-parametric techniques, on the other hand, do not fix the functional form of the intensity function. Methods in this class for modeling the intensity function include ones that are spline-based \citep{Dias2008}, wavelet-based \citep{Kolaczyk1999, Miranda2011}, and kernel-based \citep{Diggle1985}. While non-parametric methods offer greater modeling flexibility, they often do not scale well with the number of observed points or the dimension $d$.
\\\\
Bayesian methods can be adopted for intensity function estimation if one has prior knowledge (e.g., on the function's smoothness) that could be used. This prior knowledge is often incorporated by treating the intensity function as a latent stochastic process; the resulting model is called a doubly-stochastic Poisson process, or Cox process \citep{Moller1998}. One popular variant of the Cox process is the trans-Gaussian Cox process, where a transformation of the intensity function is a Gaussian process (GP). Inference for such models typically requires Markov chain Monte Carlo methods \citep{Adams2009}, which scale poorly with the number of observed points and dimension $d$. Approximate Bayesian methods such as variational inference \citep{Zammit2011, Lloyd2015}, or Laplace approximations \citep{Illian2012}, often impose severe, and sometimes inadequate, restrictions on the functional form of the posterior distributions. 
\\\\
The models for the intensity function discussed above either place assumptions on the intensity function that are overly restrictive, or require computational methods that are inefficient, in the sense that they do not scale well with data size and/or the dimension $d$. Here, we present a new model for the intensity function that overcomes both limitations. The model finds its roots in transportation of probability measure \citep{Marzouk2016}, an approach that has gained popularity recently for its ability to model arbitrary probability density functions. The basic idea of this approach is to construct a ``transport map'' between the complex, unknown, intensity function of interest, and a simpler, known,  reference intensity function. 
\\\\
We use a map that is sufficiently complex for it to approximate arbitrary intensity functions on subsets of $\mathbb{R}^d$, and one that is easy to fit to observational data. Specifically, we construct a transport map through compositions of several simple \emph{ increasing triangular maps} \citep{Marzouk2016}, in a procedure sometimes referred to as map stacking \citep{Papamakarios2017}.  Our model has the ``universal property'' \citep{Hornik1989}, in the sense that a large class of intensity functions can be approximated arbitrarily well using this approach. We estimate the parameters in the map using an optimization framework wherein \Copy{GPUs}{computations are carried out efficiently on graphics processing units using software libraries created to facilitate deep learning}. We also develop a technique to efficiently generate a realization from the fitted point process, and a nonparametric bootstrap approach \citep{Efron1981} to quantify uncertainties on the estimated intensity function via the stack of increasing triangular maps. 
\\\\
The article is organized as follows. Section~\ref{Bg_sec} establishes the notation and the required theoretical background on transportation of probability measures, while Section~\ref{method_sec} presents our proposed method for intensity function modeling and estimation of NHPPs, and also a theorem relating to the universal approximation property of our model. Results from simulation and real-application experiments are given in Section~\ref{experiments}. Section \ref{conc_sec} concludes. Additional technical material is provided in Appendix~\ref{Appendix}.
\\\\



\section{Transportation of Probability Measure}
\label{Bg_sec}

Our methodology for intensity function modeling in Section~\ref{method_sec} is based on measure transport, and techniques that enable it for density estimation. In Section~\ref{mt_sec} we briefly describe measure transport and increasing triangular maps. In Section~\ref{tri_sec} we discuss parameterizations of increasing triangular maps and the one we choose in our approach to modeling the intensity function, while in Section~\ref{comp_sec} we briefly discuss the composition of such maps in a deep learning framework.

\subsection{Measure Transport and Increasing Triangular Maps}
\label{mt_sec}
Consider two probability measures $\mu_0(\cdot)$ and $\mu_1(\cdot)$ defined on ${\cal X}$ and ${\cal Z}$, respectively. A transport map $T: {\cal X} \rightarrow {\cal Z}$ is said to push forward $\mu_0(\cdot)$ to $\mu_1(\cdot)$ (written compactly as $T_{\#\mu_0}(\cdot) = \mu_1(\cdot)$) if and only if
\begin{eqnarray}
\label{transport_map}
 \mu_1(B) = \mu_0(T^{-1}(B)), \quad \mbox{for any Borel subset } B \subset {\cal Z}.
\end{eqnarray}
The inverse $T^{-1}(\cdot)$ is treated in the general set valued sense, that is, $\xvec \in T^{-1}(\zvec)$ if $T(\xvec) = \zvec$. If $T(\cdot)$ is injective, then the relationship in (\ref{transport_map}) can also be expressed as
\begin{equation}
\label{transport_map2}
 \mu_{1}(T(A)) = \mu_{0}(A), \quad \mbox{for any Borel subset } A \subset {\cal X}.
 \end{equation}
A transport map satisfying (\ref{transport_map}) represents a deterministic coupling of the probability measures $\mu_0(\cdot)$ and $\mu_1(\cdot)$. An alternative interpretation of the transport map $T(\cdot)$ is that if $v$ is a random vector distributed according to the measure $\mu_0(\cdot)$, then $T(v)$ is distributed according to $\mu_1(\cdot)$.
\\\\
Suppose ${\cal X}, {\cal Z} \subseteq \mathbb{R}^{d}$, and that both $\mu_0(\cdot)$ and $\mu_1(\cdot)$ are absolutely continuous with respect to the Lebesgue measure on $\mathbb{R}^{d}$, with densities $\intd\mu_0(x) = f_0(x) \intd x$ and $\intd\mu_1(z) = f_1(z) \intd z$, respectively.  Furthermore, assume that the map $T(\cdot)$ is bijective differentiable with a differentiable inverse $T^{-1}(\cdot)$ (i.e., assume that $T(\cdot)$ is a $C^{1}$ diffeomorphism), then \eqref{transport_map2} is equivalent to 
\begin{equation} \label{eq:transport}
f_0(\xvec) = f_1(T(\xvec)) |\mbox{det}(\nabla T(\xvec))|, \quad \xvec \in {\cal X}.
\end{equation}


\noindent The conditions under which the map $T_{\#\mu_0}(\cdot) = \mu_1(\cdot)$ exists are established in \cite{Brenier1991} and \cite{Mccann1995}. Of particular note is that $T(\cdot)$ is guaranteed to exist when both $\mu_0(\cdot)$ and $\mu_1(\cdot)$ are absolutely continuous. There may exist infinitely many transport maps that satisfy (\ref{transport_map}). One particular type of transport map is an \emph{increasing triangular map}, that is,
\begin{eqnarray}
\label{triangular_map}
T(\xvec) = (T^{(1)}(x^{(1)}), T^{(2)}(x^{(1)}, x^{(2)}), \ldots, T^{(d)}(x^{(1)}, \ldots, x^{(d)}))',\quad x \in {\cal X},
\end{eqnarray}
where, for $k = 1,\dots, d,$ one has that $T^{(k)}(x^{(1)}, \ldots, x^{(k)})$ is monotonically increasing in $x^{(k)}$. In particular, the Jacobian matrix of an increasing triangular map, if it exists, is triangular with positive entries on its diagonal. Increasing triangular maps have a deep connection with the \emph{optimal transport problem} \citep{Villani2009} which seeks to choose a transport map such that the \emph{total cost of transportation} is minimized. Due to its connection with optimal transport, and because of its structure that leads to efficient computations, we will be exclusively considering this class of maps in the following sections.

\subsection{Parameterization of Increasing Triangular Maps}
\label{tri_sec}
Various approaches to parameterize an increasing triangular map have been proposed \citep[see, for example, ][]{Germain2015, Dinh2015, Dinh2017}. One class of parameterizations is based on the so-called ``conditional networks'' \citep{Papamakarios2017, Huang2018}. Consider for now a map comprising just one increasing triangular map, which we denote as $T_1(\cdot)$ (we will later consider many of these in composition), and let $x \equiv (x^{(1)}, \ldots, x^{(d)})'$. The increasing triangular map $T_1(\cdot)$ we use has the following form: 
\begin{eqnarray}
\label{cond_network}
 T_1^{(1)}(x^{(1)}) &=& S_1^{(1)}(x^{(1)}; \theta_{11}), \nonumber \\
T_1^{(k)}(x^{(1)}, \ldots, x^{(k)}) &=& S_1^{(k)}( x^{(k)}; \theta_{1k}(x^{(1)}, \ldots, x^{(k-1)}; \vartheta_{1k}) ),  \quad k = 2,\dots,d,
\end{eqnarray}
for $x \in {\cal X}$, where $\theta_{1k}(x^{(1)}, \ldots, x^{(k-1)}; \vartheta_{1k}), k = 2,\dots,d, $ is the $k$th ``conditional network'' that takes $x^{(1)}, \ldots, x^{(k-1)}$ as inputs and is parameterized by $\vartheta_{1k}$, and $S_1^{(k)}(\cdot)$ is generally a very simple univariate function of $x^{(k)}$, but with parameters that depend in a relatively complex manner on $x^{(1)}, \ldots, x^{(k-1)}$ through the network. \Copy{CondNet}{Therefore, a conditional network is just a multivariate mapping that takes input $x^{(1)}, \ldots, x^{(k-1)}$ and returns an output in $\mathbb{R}^{m_k}$, where $m_k$ is the number of parameters that parameterize $S_{1}^{(k)}(\cdot)$.} We hence have that $\theta_{1k}: \mathbb{R}^{k-1} \rightarrow \mathbb{R}^{m_k}$. It is often the case that feedforward neural networks are used as the conditional networks \citep{Fine1999}. For ease of exposition, from now on we will slightly abuse the notation and denote $\theta_{1k}(x^{(1)}, \ldots, x^{(k-1)}; \vartheta_{1k})$ simply as $\theta_{1k}$, thereby omitting the explicit dependence on the inputs and the parameters $\vartheta_{1k}$. 
\\\\
One class of maps using conditional networks is that of masked autoregressive flows \citep{Papamakarios2017}. In this class, $m_k = 2,~k = 1,\dots, d$, and the output of the conditional network $\theta_{1k} \equiv (\theta_{1k}^{(1)}, \theta_{1k}^{(2)})' \in \mathbb{R}^2$ parameterizes  $S_1^{(k)}(\cdot)$ as
\begin{eqnarray}
\label{linear_flow}
S_1^{(k)}( x^{(k)}; \theta_{1k}) =  \theta_{1k}^{(1)} + x^{(k)}\exp(\theta_{1k}^{(2)}),\quad x^{(k)} \in {\cal X}^{(k)}.
\end{eqnarray}
In \eqref{linear_flow}, the univariate function $S_1^{(k)}(\cdot)$ is a linear function of $x^{(k)}$ with location parameter $\theta_{1k}^{(1)}$ and scale parameter $\exp(\theta_{1k}^{(2)})$. Monotonicity of $S_1^{(k)}(\cdot)$, and hence of $T_1^{(k)}(\cdot)$, is ensured since $\exp(\theta_{1k}^{(2)}) > 0$.  Another class of such maps is the class of inverse autoregressive flows, proposed by \cite{Kingma2016}. In this class, $m_k = 2, k = 1,\dots,d,$ and 
\begin{eqnarray}
\label{weighted_average_flow}
S_1^{(k)}( x^{(k)}; \theta_{1k} ) = \sigma(\theta_{1k}^{(2)}) x^{(k)} + (1 - \sigma(\theta_{1k}^{(2)})) \theta_{1k}^{(1)},\quad x^{(k)} \in {\cal X}^{(k)},
\end{eqnarray}
where $\sigma(\cdot)$ is the sigmoid function. In this class of maps, each $S_1^{(k)}(x^{(k)})$ outputs the weighted average of $x^{(k)}$ and $\theta_{1k}^{(1)}$, with the weights given by $\sigma(\theta_{1k}^{(2)})$ and $1 - \sigma(\theta_{1k}^{(2)})$, respectively. Monotonicity of $S_1^{(k)}(\cdot)$, and hence of $T_1^{(k)}(\cdot)$, is ensured since $\sigma(\theta_{1k}^{(2)}) > 0$.
\\\\
Both (\ref{linear_flow}) and (\ref{weighted_average_flow}) are generally too simple for modeling density functions or, in our case, intensity functions. In this work we therefore focus on the class of neural autoregressive flows, proposed by \cite{Huang2018}, which are more flexible. In this class, $m_k = 3M$ for $k = 1,\dots, d,$ and $M \ge 1$, and the $k$-th component of the map has the form
\begin{eqnarray}
\label{uni_flow}
S_1^{(k)}( x^{(k)}; \theta_{1k} ) = \sigma^{-1} \Big( \sum_{i=1}^{M} w_{1ki} \sigma( a_{1ki} x^{(k)} + b_{1ki}) \Big),
\end{eqnarray}
where $\sigma^{-1}(\cdot)$ is the logit function, $a_{1ki} \equiv \exp(\theta_{1k}^{(2i)})$, $b_{1ki} \equiv \theta_{1k}^{(3i)}$, and $w_{1ki} \equiv \exp(\theta_{1k}^{(1i)})$ is subject to the constraint $\sum_{i=1}^{M} w_{1ki} = 1$. As with the other two maps discussed above, monotonicity of $S_1^{(k)}(\cdot)$, and hence of $T_1^{(k)}(\cdot)$, is ensured through this construction. The Jacobian of the neural autoregressive flow can be computed using the chain rule since the gradient of both $\sigma(\cdot)$ and $\sigma^{-1}(\cdot)$ are analytically available; this is important for computational purposes since such formulations can easily be handled using automatic differentiation libraries.\\\\
Each univariate function $S_1^{(k)}(\cdot)$ in the neural autoregressive flow comprises two sets of smooth, nonlinear transforms: $M$ sigmoid functions that map from $\mathbb{R}$ to the unit interval, and one logit function that maps from the unit interval to $\mathbb{R}$. The complexity/flexibility of $S_1^{(k)}(\cdot)$ is largely determined by $M$. Note that the neural autoregressive flow has a very similar form to the conventional feedforward neural network with sigmoid activation functions.
\\\\
It is straightforward to see that each component of the increasing triangular map constructed above is differentiable. Indeed, $S_1^{(k)}( x^{(k)}; \theta_{1k} )$ in \eqref{uni_flow} is obviously differentiable with respect to $x^{(k)}$. Also, $S_1^{(k)}( x^{(k)}; \theta_{1k} )$, when treated as a function of $\theta_{1k}$, is clearly differentiable with respect to $\theta_{1k}$, while the conditional network $\theta_{1k}(x^{(1)}, \ldots, x^{(k-1)}; \vartheta_{1k})$ is also differentiable with respect to the input $x^{(1)}, \ldots, x^{(k-1)}$ if it itself is a feedforward neural network with sigmoid activation functions, which we will assume from now on. Therefore, $T_1^{(k)}(x^{(1)}, \ldots, x^{(k)}) = S_1^{(k)}( x^{(k)}; \theta_{1k}(x^{(1)}, \ldots, x^{(k-1)}; \vartheta_{1k}) ) $ is also differentiable with respect to $x^{(1)}, \ldots, x^{(k-1)}$ for $k = 1,\dots, d$.
\\\\
A natural question to ask is how well an arbitrary density function can be approximated by a density constructed using the neural autoregressive flow. It has been shown that the neural autoregressive flow satisfies the `universal property' for the set of positive continuous probability density functions, in the sense that any target density that satisfies mild smoothness assumptions can be approximated arbitrarily well \citep{Huang2018}. We provide a universal approximation theorem for the case of the process density of a Poisson process  in Section \ref{universal}. 

\subsection{Composition of Increasing Triangular Maps}
\label{comp_sec}
It is well known that a neural network with one hidden layer can be used to approximate any continuous function on a bounded domain  arbitrarily well \citep{Hornik1989, Cybenko1989, Barron1994}. However, the size of a single layer network (in terms of the number of parameters) that may be required to achieve a desired level of function approximation accuracy may be prohibitively large. This is important as, despite the universal property of the neural autoregressive flow, both the conditional network and the univariate function $S_1^{(k)}(\cdot)$ in (\ref{uni_flow}) may need to be made very complex in order to approximate a target density up to a desired level of accuracy. Specifically, $M$, as well as the number of parameters appearing in the conditional networks, $\{\vartheta_{1k}\}$, might be prohibitively large. 
\\\\
Neural networks with many hidden layers, known as deep nets, tend to have faster convergence rates to the target function compared to shallow networks \citep{Eldan2016, Weinan2018}. In our case, layering several relatively parsimonious triangular maps through composition is an attractive way of achieving the required representational ability while avoiding an explosion in the number of parameters. Specifically, we let $T(\cdot) = T_N \circ \cdots \circ T_1 (\cdot)$, where $T_l(\cdot), l  = 1,\dots,N$, are increasing triangular maps of the form given in \eqref{cond_network}, parameterized using neural autoregressive flows.
\\\\
The composition does not break the required bijectivity of $T(\cdot)$, since a bijective function of a bijective function is itself bijective. Computations also remain tractable, since the determinant of the gradient of the composition is simply the product of the  determinants of the individual gradients. Specifically, consider two increasing triangular maps $T_1(\cdot)$ and $T_2(\cdot)$, each constructed using the neural network approach described above. The composition $T_2 \circ T_1 (\cdot)$ is bijective, and its gradient at some $x \in \cal X$ has determinant,
$$ \mbox{det}(\nabla T_2 \circ T_1(x)) = (\mbox{det} (\nabla T_1(x)))(\mbox{det} (\nabla T_2(T_1(x)))).$$
Further, since the maps have a triangular structure, the Jacobian at some point $x$ is a triangular matrix, for which the determinant is easy to compute. The determinant of the composition evaluated at $x$ is hence also easy to compute. 

\section{Intensity Modeling and Estimation via Measure Transport}
\label{method_sec}

Consider a NHPP ${\cal P}$ defined on a bounded domain ${\cal X} \subset \mathbb{R}^{d}$, and let $N(\cdot)$ be the stochastic process characterizing ${\cal P}$, where $N(B)$ is the number of events in $B \subseteq \cal X$. A NHPP ${\cal P}$ defined on ${\cal X}$ is completely characterized by its intensity function $\lambda: {\cal X} \rightarrow [0, \infty)$, such that $N(B) \sim \mbox{Poisson}( \mu_{\lambda}(B) ),$ where $ \mu_{\lambda}(B) = \int_{B} \lambda(x) \intd x $ is the corresponding intensity measure. If $\lambda(\cdot) = \lambda$ is constant, the Poisson process is homogeneous. In this section we present our approach for modeling and estimating $\lambda(\cdot)$ from observational data.

\subsection{The Optimization Problem}
\label{optim_problem}

The density of a Poisson process does not exist with respect to the Lebesgue measure. It is therefore customary to instead consider the density of the NHPP of interest with respect to the density of the unit rate Poisson process, that is, the process with $\lambda(\cdot) = 1$. We denote the resulting density as $f_\lambda(\cdot)$. Let $|B|$ denote the Lebesgue measure of a bounded set $B \subset \mathbb{R}^{d}$, and let $X \equiv \{x_1, \ldots, x_n\},$ where $x_i \in {\cal X}, i = 1,\ldots,n,$ and $n \ge 1,$ be a realization of ${\cal P}$. The density function $f_\lambda(\cdot)$ evaluated at $X$ is given by,
\begin{align}
    f_{\lambda}(X) &= \exp(|{\cal X}| - \mu_{\lambda}({\cal X})) \prod_{x \in X} \lambda(x) \nonumber \\
         &= \exp \Big(- \int_{{\cal X}} (\lambda(x) - 1) \intd x + \sum_{x \in X} \log \lambda(x)  \Big) \label{fdensity}.
\end{align}




\noindent Our objective is to estimate the unknown intensity function $\lambda(\cdot)$ that generates the data $X$. A commonly employed strategy is to estimate $\lambda(\cdot)$ using maximum likelihood. It is well known that maximizing the likelihood is equivalent to minimizing the Kullback--Leibler (KL) divergence between the true density and the estimate. For two probability densities $p(\cdot)$ and $q(\cdot)$, the KL divergence is defined as $ D_{KL}(p||q) = \int p(x) \log (p(x)/ q(x))\intd x$.  We therefore estimate the unknown intensity function $\lambda(\cdot)$ by $\hat\lambda(\cdot)$, as follows,
\begin{eqnarray}
\label{min_kl}
 \hat{\lambda}(\cdot) = \argmin_{\rho(\cdot) \in {\cal A} }\big\{ D_{KL}(f_{\lambda}||f_{\rho}) \big\} ,
\end{eqnarray}
where ${\cal A}$ is some set of intensity functions, and $f_\rho(\cdot)$ is the density of a NHPP with intensity function $\rho(\cdot)$ taken with respect to the density of the unit rate Poisson process. \\\\
To solve the optimization problem defined in \eqref{min_kl}, we first derive the following expression for the KL divergence between two arbitrary densities.
\begin{proposition}
\label{kl_prop}
\Copy{KL}{Consider two Poisson processes ${\cal P}_1$, ${\cal P}_2$ on ${\cal X}$ with intensity functions $\rho_1(\cdot)$ and $\rho_2(\cdot)$, respectively. Denote the corresponding densities with respect to the unit rate Poisson process as $f_{\rho_{1}}(\cdot)$ and $f_{\rho_{2}}(\cdot)$, where the probability measure corresponding to the density $f_{\rho_{1}}$ is absolute continuous with respect to the probability measure corresponding to $f_{\rho_{2}}$. The Kullback-Leibler divergence $D_{KL}(f_{\rho_{1}}||f_{\rho_{2}})$ is: 
$$  D_{KL}(f_{\rho_{1}}||f_{\rho_{2}}) = \int_{{\cal X}} (\rho_{2}(x) - \rho_{1}(x) ) \textup{d} x + \int_{{\cal X}} \rho_{1}(x) \log \frac{ \rho_{1}(x) }{\rho_{2}(x) }\textup{d} x.$$}
provided that the integrals on the right hand side exist.
\end{proposition}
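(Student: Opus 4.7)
The plan is to compute $D_{KL}(f_{\rho_1}\|f_{\rho_2}) = \mathbb{E}_{X\sim f_{\rho_1}}\!\left[\log\frac{f_{\rho_1}(X)}{f_{\rho_2}(X)}\right]$ directly from the explicit density formula already derived in \eqref{fdensity}. Substituting that formula for both $\rho_1$ and $\rho_2$ and taking logs, the log-likelihood ratio splits into a deterministic piece and a sum over the points of the realization:
\begin{equation*}
\log\frac{f_{\rho_1}(X)}{f_{\rho_2}(X)} = -\int_{\mathcal{X}}\bigl(\rho_1(x)-\rho_2(x)\bigr)\,\mathrm{d}x + \sum_{x\in X}\log\frac{\rho_1(x)}{\rho_2(x)}.
\end{equation*}
The first term does not depend on $X$, so its expectation under $\mathcal{P}_1$ is itself.

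The essential step is then to compute the expectation of the sum $\sum_{x\in X}g(x)$ with $g(x)=\log(\rho_1(x)/\rho_2(x))$ under the law of $\mathcal{P}_1$. For this I would invoke Campbell's theorem for Poisson point processes, which gives
\begin{equation*}
\mathbb{E}_{X\sim\mathcal{P}_1}\!\left[\sum_{x\in X} g(x)\right] = \int_{\mathcal{X}} g(x)\,\rho_1(x)\,\mathrm{d}x,
\end{equation*}
whenever the right-hand side integral is absolutely convergent. Applying this with the specific $g$ yields exactly $\int_{\mathcal{X}}\rho_1(x)\log(\rho_1(x)/\rho_2(x))\,\mathrm{d}x$. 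Summing the two pieces and flipping the sign of the first integral produces the claimed identity.

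The mildly delicate point, and the one I would treat most carefully, is the justification that all expectations and integrals exist and may be combined. The absolute continuity of the measure induced by $f_{\rho_1}$ with respect to that induced by $f_{\rho_2}$ ensures that $\log(\rho_1/\rho_2)$ is well-defined $\rho_1$-almost everywhere (so the potentially problematic set where $\rho_2=0$ but $\rho_1>0$ has $\rho_1$-measure zero), and the hypothesis that both integrals on the right-hand side exist lets me apply Campbell's formula and Fubini to pass the expectation through the sum. Beyond this bookkeeping, the argument is a direct substitution followed by one application of Campbell's theorem, so I do not anticipate any deeper obstacle.
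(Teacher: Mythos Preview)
Your proposal is correct and follows essentially the same approach as the paper: both arguments substitute the explicit density formula \eqref{fdensity} into the definition of the KL divergence and then invoke Campbell's theorem to evaluate the expectation of the sum over points. The only cosmetic difference is that the paper computes $\mathbb{E}_{\rho_1}\{\log f_{\rho_1}\}$ and $\mathbb{E}_{\rho_1}\{\log f_{\rho_2}\}$ separately before subtracting, whereas you form the log-ratio first; your added remarks on absolute continuity and integrability are a welcome bit of extra care.
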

\noindent We give a proof for Proposition~\ref{kl_prop} in  Appendix~\ref{proofKLsec}. 
\\\\
 In order to apply the measure transport approach to intensity function estimation, we first define $\tilde{\rho}(\cdot) = \rho(\cdot) / \mu_{\rho}({\cal X}) $ and $\tilde{\lambda}(x) =  \lambda(x) / \mu_{\lambda}({\cal X}),$ so that $\tilde{\rho}(\cdot)$ and $\tilde{\lambda}(\cdot)$ are valid density functions with respect to Lebesgue measure. In particular, $\tilde{\rho}(\cdot)$ and $\tilde{\lambda}(\cdot)$ are termed \emph{process densities} by \citet{Taddy2010}, which can be modeled separately from the integrated intensities $\mu_{\rho}({\cal X})$ and $\mu_{\lambda}({\cal X})$, respectively. The KL divergence $D_{KL}(f_\lambda || f_\rho)$ can be written in terms of process densities as follows,
\begin{eqnarray}
\label{kl_min2}
 D_{KL}(f_{\lambda}||f_{\rho}) = \mu_{\rho}({\cal X)} - \mu_{\lambda}({\cal X}) \int_{{\cal X}} \tilde{\lambda}(x) \log \tilde{\rho}(x) \intd x - \mu_{\lambda}({\cal X}) \log \mu_{\rho}({\cal X}) + \textrm{const.},
\end{eqnarray}
where ``const.''~captures other terms not dependent on $\mu_{\rho}({\cal X})$ or $\tilde{\rho}(\cdot)$. This formulation allows us to model the integrated intensity $\mu_{\rho}({\cal X}) $, and the density $ f_{\rho}(\cdot)$ separately. The same approach was also adopted by \citet{Taddy2010} where they developed a nonparametric Dirichlet process mixtures framework for intensity function estimation. The integral $ \int_{{\cal X}} \tilde{\lambda}(x) \log \tilde{\rho}(x) \intd x $ and $\mu_{\lambda}({\cal X})$ are not analytically available since the true intensity function $\lambda(\cdot)$ is unknown. However, treating this integral as an expectation, we see that, for reasonably large $n$, it can be approximated by
\begin{equation} \label{eq:MCapprox}\int_{{\cal X}} \tilde{\lambda}(x) \log \tilde{\rho}(x) \intd x \approx \frac{1}{n} \sum_{i=1}^{n} \log \tilde{\rho}(x_{i}), \end{equation}
where recall that $X \equiv \{x_1,\dots,x_n\}$ is the (observed) point-process realization under the true intensity function $\lambda(\cdot)$. Similarly, by Poissonicity of the NHPP, $n$ is sufficient for $\mu_{\lambda}({\cal X})$, and therefore we approximate the integrated intensity as
\begin{eqnarray}
\label{eq:muLambda}
\mu_{\lambda}({\cal X}) \approx n .
\end{eqnarray}

Using the process-density representation of the intensity function, and the Monte Carlo approximations \eqref{eq:MCapprox} and \eqref{eq:muLambda}, we re-express the optimization problem \eqref{min_kl} in terms of the estimate of the integrated intensity, $\hat{\mu}_\lambda({\cal X})$, and the estimated process density $\hat{\tilde\lambda}(\cdot)$,
\begin{eqnarray}
\label{emp_kl_min}
\{\hat{\mu}_\lambda({\cal X}), \hat{\tilde\lambda}(\cdot) \}= \argmin_{\substack{\mu_{\rho}({\cal X}) \in \mathbb{R}^+ \\ \tilde{\rho}(\cdot) \in {\cal \tilde{A}}}} \left\{  \mu_{\rho}({\cal X}) -  \sum_{i=1}^{n} \log \tilde{\rho}(x_{i}) - n \log \mu_{\rho}({\cal X})  \right\},
\end{eqnarray}
where now $\tilde{\cal A}$ is some set of process densities, which we will establish in Section~\ref{proc_est}. It is easy to see that setting $\mu_{\rho}({\cal X}) = n$ minimizes the objective function in \eqref{emp_kl_min}. Fixing $\mu_{\rho}({\cal X}) = n$ leads us to the optimization problem
\begin{equation}
\label{lambdatildehat}
 \hat{\tilde\lambda}(\cdot) = \argmin_{ \tilde{\rho}(\cdot) \in {\cal \tilde{A}}} \left\{- \sum_{i=1}^{n} \log \tilde{\rho}(x_{i})\right\},
\end{equation}
which is equivalent to maximizing the likelihood of observing $X$. 

\subsection{Modeling the Process Density via Probability Measure}
\label{proc_est}
We model the process density $\tilde{\rho}(\cdot)$ using the transportation of probability measure approach  described in Section \ref{Bg_sec}. Specifically, we seek a diffeomorphism $T: {\cal X} \rightarrow {\cal Z}$, where ${\cal Z}$ need not be the same as ${\cal X}$, such that 
$$ \tilde{\rho}(x) = \eta(T(x)) | \mbox{det}\nabla T(x)|, \quad x \in {\cal X},$$
where $\eta(\cdot)$ is some simple reference density on ${\cal Z}$, and $|\mbox{det}\nabla T(\cdot)| > 0$. Popular choices of $\eta(\cdot)$ include the standard normal distribution if ${\cal Z}$ is unbounded, and the uniform distribution if ${\cal Z}$ is bounded. 
\\\\
While the domain and the range  of the map $T(\cdot)$ can be arbitrary subsets of $\mathbb{R}^{d}$, it is generally easier to construct transport maps from $\mathbb{R}^{d} $ to $\mathbb{R}^{d}$. The domain $\cal X$ is bounded, and therefore we can assume, without loss of generality, that ${\cal X} = (0,1)^{d}$, and we first apply an element-wise \emph{logit} transform to each coordinate of the vector $x = (x^{(1)}, \ldots, x^{(d)})^{'}$ to obtain the vector $y \equiv (y^{(1)},\dots, y^{(d)})' \in \mathbb{R}^d$, where %
$y^{(k)} = \sigma^{-1}(x^{(k)}), \quad k=1, \ldots, d$. The Jacobian of this transformation is given by  $\prod_{k=1}^{d}((x^{(k)})^{-1} + (1-x^{(k)})^{-1})$. 
We subsequently construct the transport map $T(\cdot)$ as a composition of $N$ increasing triangular maps $T_N \circ T_{N-1} \circ \,\cdots\, \circ T_1(\cdot) $ (see Section \ref{comp_sec}). Each triangular map $T_j(\cdot), j=1, \ldots, N,$ in the composition is parameterized using a conditional network approach as detailed in Section \ref{tri_sec}. Specifically, we adopt the neural autoregressive flow where the $k$th component of each triangular map is modeled as in (\ref{uni_flow}).
\\\\


Denote the parameters appearing in the $k$th conditional network associated with the $j$th layer as $\vartheta_{jk}$ and let $\boldsymbol\Theta \equiv \{\theta_{j1}: j = 1,\dots, N \} \cup \{\vartheta_{jk}: k = 2,\dots, d; j = 1,\dots, N\}$. The optimization problem \eqref{lambdatildehat} reduces to the optimization problem:
\begin{eqnarray}
\label{optimTheta}
 \hat{\boldsymbol{\Theta}} = \argmin_{\boldsymbol{\Theta}} \left\{- \sum_{i=1}^{n} \Big( \log \eta( T(y_{i}) ) + \log \mbox{det} \nabla T(y_{i}) \Big)\right\}.
\end{eqnarray}
The optimization problem can be solved efficiently using automatic differentiation libraries, stochastic gradient descent, and graphics processing units for efficient computation. We used \texttt{PyTorch} for our implementation \citep{Paszke2017} and adapted the code provided by \citet{Huang2018}.

\subsection{Universal Approximation}
\label{universal}
The increasing triangular maps constructed using neural autoregressive flows (\ref{uni_flow}) satisfy a universal property in the context of probability density approximation. 
This universal approximation property naturally applies to the process density of a Poisson process. One need only prove this property for the case of one triangular map since, if two maps have the universal property, their composition also has the universal property.

\begin{theorem}
\label{uni_thm}
\Copy{Theorem1}{Let $\cal P$ be a non-homogeneous Poisson process with positive continuous process density $\tilde{\lambda}(\cdot)$ on ${\cal X} \subset \mathbb{R}^{d}$. Suppose further that the weak (Sobolev) partial derivatives of $\tilde{\lambda}$ up to order $d+1$ are integrable over $\mathbb{R}^{d}$.  There exists a sequence of triangular mappings $(T_i(\cdot))_{i}$ wherein the $k$th component of each map $T^{(k)}_i(\cdot)$ has the form $(\ref{uni_flow})$, and wherein the corresponding conditional networks are universal approximators (e.g., feedforward neural networks with sigmoid activation functions), such that
$$ \eta(T_i(\cdot)) \mbox{det} (\nabla T_i(\cdot)) \rightarrow \tilde{\lambda}(\cdot), $$
with respect to the sup norm on any compact subset of $\mathbb{R}^{d}$.  }
\end{theorem}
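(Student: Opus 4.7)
The plan is to reduce this theorem to the universal approximation theorem for neural autoregressive flows established by \citet{Huang2018}, exploiting the fact that the process density $\tilde\lambda$ is itself a valid probability density on $\mathcal{X}$. The theorem then becomes a statement that NAF-parameterized increasing triangular maps can represent the pushforward of an arbitrary positive continuous density (satisfying the stated Sobolev regularity) onto the simple reference density $\eta$.

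First I would invoke the Knothe--Rosenblatt construction to produce a canonical target map. Under the assumed positivity and continuity of $\tilde\lambda$ on $\mathcal{X}$, there is a unique increasing triangular diffeomorphism $T^{\star}:\mathcal{X} \to \mathcal{Z}$ satisfying $\eta(T^{\star}(x))\,|\det\nabla T^{\star}(x)| = \tilde\lambda(x)$. Each component $T^{\star(k)}$ is continuously monotone increasing in $x^{(k)}$, and the hypothesis that the weak partial derivatives of $\tilde\lambda$ up to order $d+1$ are integrable transfers to comparable Sobolev regularity of each $T^{\star(k)}$ via the chain rule and the triangular form of the Jacobian. This regularity is what lets the map \emph{and} its Jacobian be approximated simultaneously on compacta.

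Next I would show that the NAF family is dense in the class of such $T^{\star}$. The scalar building block $S^{(k)}(\cdot;\theta)$ in (\ref{uni_flow}) is a convex combination of sigmoids post-composed with a logit, and \citet{Huang2018} established that this family is dense, uniformly on compact intervals, in the set of continuous strictly increasing functions (with control on the derivative through the constraint $\sum_i w_{1ki}=1$ and the positivity of $a_{1ki}$). For the dependence on the preceding coordinates, the assumption that each conditional network is a universal approximator (for instance, a single-hidden-layer sigmoid network per \citealp{Hornik1989} or \citealp{Cybenko1989}) lets me approximate the mapping $(x^{(1)},\ldots,x^{(k-1)}) \mapsto \theta^{\star}_{ik}$ uniformly on any compact set. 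Chaining the two approximations across $k=1,\ldots,d$ yields a sequence $T_i$ with $T_i \to T^{\star}$ and $\nabla T_i \to \nabla T^{\star}$ uniformly on compact subsets of $\mathcal{X}$.

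The hard part will be the passage from uniform convergence of the maps and their gradients to uniform convergence of the densities $\eta(T_i(\cdot))\det(\nabla T_i(\cdot))$. The triangular structure of $T_i$ is essential here, since the Jacobian determinant collapses to the product of the univariate diagonal derivatives $\partial_{x^{(k)}} S^{(k)}_i$, reducing the task to the joint approximation of $d$ scalar monotone functions and their derivatives rather than a full $d$-dimensional Jacobian. Given the uniform convergence of $T_i$ and $\nabla T_i$ on a compact set, together with continuity of $\eta$ and positivity of the diagonal entries (which keeps $\det\nabla T_i$ bounded away from $0$), continuity of the map $(T,\nabla T) \mapsto \eta(T)\det(\nabla T)$ delivers $\eta(T_i)\det(\nabla T_i) \to \eta(T^{\star})\det(\nabla T^{\star}) = \tilde\lambda$ in sup norm on that set, completing the argument.
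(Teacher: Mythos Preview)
Your proposal follows essentially the same strategy as the paper's proof: invoke an existence result for a smooth increasing triangular map $T^\star$ pushing $\tilde\lambda$ to $\eta$ (the paper cites \citet{Bogachev2005} rather than naming it Knothe--Rosenblatt), approximate each component of $T^\star$ in $C^1$ by the NAF form with continuous ``ideal'' conditioners, replace those conditioners by feedforward networks via universal approximation, and finally pass to the density using the product structure of the triangular Jacobian and continuity of $\eta$.

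Two places where the paper does substantive work that your sketch defers or elides. First, the paper does not simply cite \citet{Huang2018} for the univariate step; it proves from scratch (via approximation results of \citet{Nestoridis2007}) that the NAF family $\sigma^{-1}\big(\sum_i w_i\sigma(a_i x+b_i)\big)$ is dense in the $C^1$ norm $\|f\|_{C^1(I)}=\max_{k=0,1}\max_{t\in I}|f^{[k]}(t)|$ on compact intervals, not merely in sup norm. This is exactly what you need to get $\partial_{x^{(k)}}S^{(k)}\to\partial_{x^{(k)}}T^{\star(k)}$ and hence $\det\nabla T_i\to\det\nabla T^\star$; your parenthetical ``with control on the derivative'' gestures at monotonicity rather than $C^1$ closeness, so be explicit that derivative approximation is required and supplied. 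Second, you assume the existence of a continuous, fixed-dimensional map $(x^{(1)},\dots,x^{(k-1)})\mapsto\theta_{k}^\star$ that can then be handed to a universal approximator, but this is not automatic: the approximating parameters $(M,a_i,b_i,w_i)$ arise from a pointwise (in the first $k-1$ coordinates) approximation argument, and one must argue separately that they can be chosen to vary continuously and with uniformly bounded dimension $M$ over compacta. The paper handles this via a Riemann-sum construction and a finite-subcover argument. Your outline is correct in shape, but these two technical points are where the real effort in the proof lies.
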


\noindent We provide a sketch of the proof of Theorem \ref{uni_thm} here and defer the complete proof to Appendix~\ref{proof1sec}. 
\\\\
\emph{Sketch of proof of Theorem \ref{uni_thm}} 

\begin{enumerate}
\item We first establish in Lemma \ref{lemma_exist} in Appendix~\ref{proof1sec} that any process density $\tilde{\lambda}(\cdot)$ that satisfies the conditions in Theorem \ref{uni_thm} can be expressed as $ \tilde{\lambda}(\cdot) = \eta(\tilde{T}(\cdot)) \mbox{det} (\nabla \tilde{T}(\cdot))$, where $\tilde{T}(\cdot)$ is some increasing continuous differentiable triangular map. \\

\item Lemmas \ref{lemma_uni} and \ref{lemma_inv} in Appendix~\ref{proof1sec} establish that, for any increasing continuously differentiable triangular map $\tilde{T}(\cdot)$ and for any $\epsilon >0$, 
$$ |\tilde{T}^{(k)}(x^{(1)}, \ldots, x^{(k)}) - S^{(k)}(x^{(k)}; \tilde{\theta}_k( x^{(1)}, \ldots, x^{(k-1)}))| < \epsilon / 2 ,$$
$$ |\nabla_k \tilde{T}^{(k)}(x^{(1)}, \ldots, x^{(k)}) - \nabla S^{(k)}(x^{(k)}; \tilde{\theta}_k( x^{(1)}, \ldots, x^{(k-1)}))| < \epsilon / 2 ,$$
for $k=2, \ldots, d$, where $\nabla_k$ denotes the derivative with respect to the $k$-th component of the input, and $\tilde{\theta}_k(\cdot)$ is some arbitrary continuous mapping from $(x^{(1)}, \ldots, x^{(k-1)})$ to the set of parameters $a_{ki}, b_{ki}, w_{ki}, i = 1,\dots,M$. \\

\item By the universality of feedforward neural networks with sigmoid activation functions, and uniform continuity of $S^{(k)}(\cdot)$ and $\nabla S^{(k)}(\cdot)$, we have that
$$ |S^{(k)}(x^{(k)}; \tilde{\theta}_k(x^{(1)}, \ldots, x^{(k-1)})) - S^{(k)}(x^{(k)}; \hat{\theta}_k(x^{(1)}, \ldots, x^{(k-1)}))| < \epsilon / 2 ,$$
$$ |\nabla S^{(k)}(x^{(k)}; \tilde{\theta}_k(x^{(1)}, \ldots, x^{(k-1)})) - \nabla S^{(k)}(x^{(k)}; \hat{\theta}_k(x^{(1)}, \ldots, x^{(k-1)}))| < \epsilon / 2 ,$$
for $k=2, \ldots, d$, where $\hat{\theta}_k(\cdot)$ is a feedforward neural network with sigmoid activation functions. \\

\item An application of the triangle inequality yields
  $$  |\tilde{T}^{(k)}(x^{(1)}, \ldots, x^{(k)}) - S^{(k)}(x^{(k)}; \hat{\theta}^{(k)}(x^{(1)}, \ldots, x^{(k-1)}))| < \epsilon,$$
  $$| \nabla_k \tilde{T}^{(k)}(x^{(1)}, \ldots, x^{(k)}) - \nabla S^{(k)}(x^{(k)}; \hat{\theta}_k(x^{(1)}, \ldots, x^{(k-1)}))| < \epsilon ,$$
for $k=2, \ldots, d$. Therefore, for an arbitrary increasing continuous differentiable triangular map $\tilde{T}(\cdot)$, there exists a triangular map $T(\cdot)$ where the $k$-th component of $T(\cdot)$ is $S^{(k)}(x^{(k)};\hat{\theta}_k(x^{(1)}, \ldots, x^{(k-1)}))$ such that
$$ |T^{(k)}(x^{(1)}, \ldots, x^{(k)}) - \tilde{T}^{(k)}(x^{(1)}, \ldots, x^{(k)})| < \epsilon, $$
$$ |\nabla_k T^{(k)}(x^{(1)}, \ldots, x^{(k)}) - \nabla_k \tilde{T}^{(k)}(x^{(1)}, \ldots, x^{(k)})| < \epsilon,$$
for $k=2, \ldots, d$. We naturally also have that
$$ |T^{(1)}(x^{(1)}) - \tilde{T}^{(1)}(x^{(1)})| < \epsilon, $$
$$ |\nabla_1 T^{(1)}(x^{(1)}) - \nabla_1 \tilde{T}^{(1)}(x^{(1)})| < \epsilon,$$
where $T^{(1)}(x^{(1)}) = S^{(1)}(x^{(1)}; \hat{\theta}_1)$.\\

\item Finally, since $\mbox{det} (\nabla T(x)) = \prod_{k=1}^{d} \nabla_k T^{(k)}(x)$, and the target density $\eta(\cdot)$ is smooth, both $| \mbox{det} (\nabla T(x)) - \mbox{det} (\nabla \tilde{T}(x)) |$, and $| \eta( T(x) ) - \eta( \tilde{T}(x) ) |, $ can be made arbitrarily small for all $x \in \mathcal{X}$. This implies that
  $$ |\eta(T(x)) \mbox{det}( \nabla T(x)) - \eta( \tilde{T}(x)) \mbox{det}  (\nabla \tilde{T}(x)) |, $$
can be made arbitrarily small  for all $x \in \mathcal{X}$, which completes the proof.
\end{enumerate}

\subsection{Simulating from the fitted point process}

An attraction of using measure transport is that one can readily simulate data based on the estimated intensity function without resorting to methods like thinning, which can be inefficient when the Poisson process is highly non-homogeneous. Here, one simulates from the simple, known reference density $\eta(\cdot)$, and then \emph{pushes back} the points through the inverse map. Since the maps we use are triangular, their inverse can be found in a relatively straightforward manner. 

Consider a point $z=(z^{(1)}, \ldots, z^{(d)})^{'} $ in the reference domain. We give an algorithm for computing $T^{-1}(z)$, where $T(\cdot)$ is a (single) increasing triangular map,  in Algorithm \ref{algo_inv}. Note that inversion under the increasing triangular map involves solving $d$ univariate root-finding problems. These problems can be efficiently solved  since each component of the map is continuous and increasing.

\begin{algorithm}[t!]
\caption{Triangular Map Inversion \newline \textbf{Input: } $z \in \mathbb{R}^{d}$, triangular map $T(\cdot)$  \newline \textbf{Output: } $y \in \mathbb{R}^{d}$ }
\begin{algorithmic}
  \State \texttt{Find $y^{(1)}$ such that $T^{(1)}(y^{(1)}) = z^{(1)}$} 
 \For{\texttt{$k=2,\ldots,d$}}
  \State \texttt{Find $y^{(k)}$ such that $T^{(k)}(y^{(1)},\ldots, y^{(k-1)},y^{(k)}) = z^{(k)}$}
\EndFor
 \State \texttt{Return: $y = (y^{(1)}, \ldots, y^{(d)})^{'}$}
\end{algorithmic}
\label{algo_inv}
\end{algorithm}

Now, when we have $N$ triangular maps in composition, $T_N \circ T_{N-1} \circ \cdots \circ T_1(\cdot)$ say, we can compute the inverse by iteratively applying Algorithm \ref{algo_inv} using $T_{N}(\cdot), T_{N-1}(\cdot), \ldots, T_1(\cdot)$. Algorithm \ref{algo_cond_sim} gives an algorithm for simulating from the fitted point process for the case when the reference probability measure is the standard multivariate normal distribution.

\begin{algorithm}[t!]
\caption{Point Process Simulation \newline \textbf{Input: } number of points $n$, maps $T_N(\cdot), \dots, T_1(\cdot)$  \newline \textbf{Output: } simulated points $x_1, \ldots, x_n \in \mathbb{R}^{d}$ }
\begin{algorithmic}
\For{\texttt{$i=1,\ldots, n$}}
  \State \texttt{Draw $z_i \sim {\cal N}(0,I_d)$}
  \State \texttt{Apply Algorithm \ref{algo_inv} to compute the inverse $y_i$ of $z_i$ under $T_N \circ \cdots \circ T_1 (\cdot)$}
  \State \texttt{Set $x_{i}^{(k)} = \sigma(y^{(k)}_{i}), \quad k=1,\ldots,d$ }
\EndFor
\State \texttt{Return: $\{x_1, \ldots, x_{n}\}$}
\end{algorithmic}
\label{algo_cond_sim}
\end{algorithm}
 
\subsection{Standard Error Estimation}

The intensity function is fitted by solving the problem given in \eqref{optimTheta}. The standard error of the fitted intensity function can be estimated using a non-parametric bootstrapping approach \citep{Efron1981}. We construct $B$ bootstrap samples by drawing the number of points $n_b, b = 1,\dots,B$, from a Poisson distribution with rate parameter $n$. For each $b = 1,\dots,B$ we then randomly sample $n_b$ points from the observed points $X$ with replacement, and fit the process density to each of these bootstrap samples, to obtain $B$ estimated process densities $\hat{\tilde{\lambda_b}}(\cdot), b = 1,\dots, B$. The standard error of the intensity function evaluated at any point $x \in {\cal X}$ is then obtained by finding the empirical standard deviation of $\{n_b\hat{\tilde\lambda}_b(x): b = 1,\dots,B\}$. Algorithm \ref{algo_boot_np} gives a summary.
\\\\
Standard error estimation can also be performed using a parametric bootstrap approach \citep{Efron1979}, where bootstrap samples are obtained from the fitted intensity function. However, this would require running Algorithms \ref{algo_inv} and \ref{algo_cond_sim} $B$ times, which would be considerably more computationally demanding. We therefore do not consider this bootstrap strategy here.

\begin{algorithm}[t!]
\caption{Standard Error Estimation Using Non-Parametric Bootstrapping \newline \textbf{Input: } observational data $X$, 
bootstrap sample size $B$ \newline \textbf{Output: } bootstrap estimates of the intensity function}
\begin{algorithmic}

\For{\texttt{$b=1, \ldots, B$}}
   \State \texttt{Draw $n_b \sim \mbox{Poisson}(n)$}
   \State \texttt{Sample $n_b$ points with replacement from $X$ to obtain $X_b \equiv \{x_1, \ldots, x_{n_b}\}$}
   \State \texttt{Estimate the triangular map, and consequently the process density, $\hat{\tilde\lambda}_b(\cdot)$ using the\\ \qquad\qquad bootstrap sample $X_b$}.
\EndFor
\State \texttt{Return: Bootstrap samples of the intensity function, $\{n_b \hat{\tilde{\lambda}}_b(\cdot)\}_{b=1}^{B}$ }
\end{algorithmic}
\label{algo_boot_np}
\end{algorithm}

\section{Illustrations}
\label{experiments}
In this section we illustrate the application of our proposed method through simulation experiments (Section~\ref{sim_sec}) and in the context of earthquake intensity modeling (Section~\ref{sec_quake}). The purpose of the simulation experiments is to demonstrate the validity of our approach and to explore the sensitivity of the estimates to the conditional networks' structure. All our illustrations are in a one or two dimensional setting, as these cover the majority of applications, but our method is scalable to higher dimensions due to the map's triangular structure should this be needed.

\subsection{Simulation Experiments}
\label{sim_sec}
 In this section we illustrate our method on simulated data in both a one dimensional and a two dimensional setting.  Our method requires one to specify the number of compositions of triangular maps, the width of the neural network in the triangular maps, the number of layers in the conditional networks (feedforward neural networks), and the width of each layer in each conditional network. In neural networks and deep learning literature, choosing the optimal network structure is an open problem. For shallow feedforward neural networks, that is, neural networks with one or two hidden layers, information criteria based methods \citep{Fogel1991} and heuristic algorithms \citep{Leung2003, Tsai2006} have been proposed to determine the optimal width of the network. However, to the best of our knowledge, analogous methods are not available for deep neural networks and neural networks with complicated structure. In higher dimensional problems, there is theoretical support for adopting a very deep neural network structure due to its representational power \citep{Eldan2016, Raghu2017}. However, since point process realizations typically lie in lower dimensional space, such results are less relevant.
\\\\
We found that in low-dimensional settings our estimates did not change considerably with the number of layers in the conditional network and therefore, here, we fix the number of layers to one. That is, we let each $\theta_{jk}^{(i)}(\cdot) $ be the output of a neural network of one layer. In both simulation experiments we set the widths of the neural networks in both the triangular maps (i.e., $M$) and the conditional networks, to 64. We set this number by successively increasing the network widths in powers of two until the intensity-function estimate was not improved. We also used network widths of 64 in the application case study of Section~\ref{sec_quake}. The learning rate for the optimization is set to $10^{-4}$ in all our experiments.
\\
\\
For the one dimensional studies, we first simulated events (via thinning) from the following two one-dimensional intensity functions,
\begin{align}
\label{sim1_intensity1}
 \lambda_1(x^{(1)}) &= 500 + 300 \sin(10x^{(1)}), \quad 0 < x^{(1)} < 1, \\
\label{sim1_intensity2}
 \lambda_2(x^{(1)}) &= 500, \quad 0 < x^{(1)} < 1. 
\end{align}
We then fitted several models to the simulated events, each with a different number of compositions of triangular maps. The procedure of simulating and model fitting was repeated 40 times in order to assess the variability in the estimated intensity functions. Each model fitting required approximately two minutes on a graphics processing unit.
\\
\\
The average and empirical standard deviation of the $L_2$ distance between the true intensity function and the estimated intensity function are shown in Tables \ref{sim1_table} and \ref{sim1_2_table}. Reassuringly, we see that the estimates, and the variability thereof, are consistent across different numbers of compositions of triangular maps for both cases. For reference, we also provide the results from kernel density estimation, where the bandwidth of a Gaussian kernel was chosen using Silverman's rule of thumb \citep{Silverman1986}. In this experiment we observe significant improvement in using deep compositional maps over conventional kernel density estimation.  For illustration, the estimated intensity functions using four compositions of triangular maps for both intensity functions are shown in Figure \ref{sim1_plot}. 
\\
\begin{table}[t!]
\caption{Average and empirical standard deviation of the $L_2$ distance between the true and the fitted intensity functions in the one dimensional simulation experiment with intensity function $ \lambda(x^{(1)}) = 500 + 300 \sin(10x^{(1)}), \quad 0 < x^{(1)} < 1$.}
\begin{center}
\begin{tabular}{lcccccc}
   \hline
No. of compositions of triangular maps & 1 & 2& 3 & 4 &5 & KDE  \\ \hline\hline
Average $L_2$ distance & 77.5 & 77.4 & 76.5 & 77.6 & 78.3 & 101.2\\ 
Standard deviation of $L_2$ distance & 12.5 & 11.3 &10.3 & 12.2 & 13.1 & 15.6 \\ \hline\hline 
\end{tabular}
\end{center}
\label{sim1_table}
\end{table}

\begin{table}[t!]
\caption{Average and empirical standard deviation of the $L_2$ distance between the true and the fitted intensity functions in the one dimensional simulation experiment with intensity function $ \lambda(x^{(1)}) = 500, \quad 0 < x^{(1)} < 1.$ }
\begin{center}
\begin{tabular}{lcccccc}
   \hline
No. of compositions of triangular maps & 1 & 2& 3 & 4 &5 & KDE  \\ \hline\hline
Average $L_2$ distance & 51.8 & 50.9 & 50.6 & 50.5 & 50.8 & 81.3\\ 
Standard deviation of $L_2$ distance & 12.9 & 13.0 &13.0 & 13.3 & 13.1 & 9.9 \\ \hline\hline 
\end{tabular}
\end{center}
\label{sim1_2_table}
\end{table}

\begin{figure*}
  \centering
  \includegraphics[width = 0.45\linewidth]{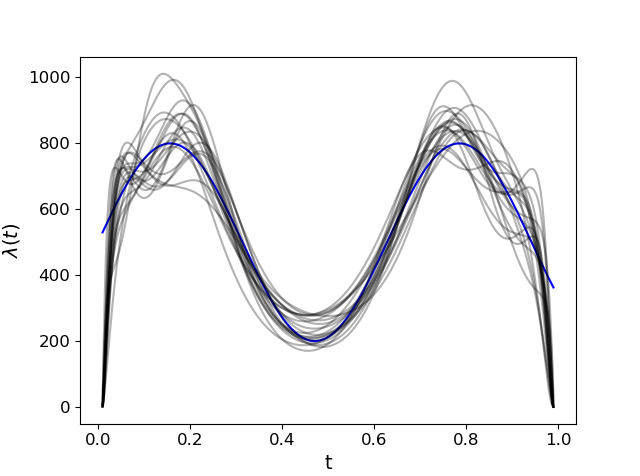} 
 \includegraphics[width = 0.45\linewidth]{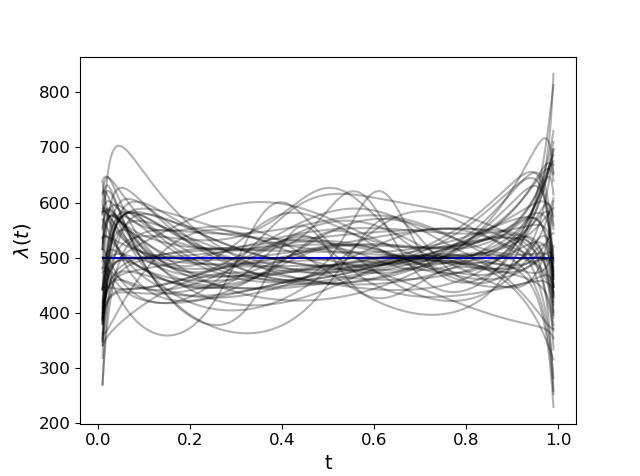} 
  \caption{The true intensity function \eqref{sim1_intensity1} (left) and \eqref{sim1_intensity2} (right) (blue) from which 40 point patterns were independently simulated from, and the corresponding 40 estimated intensity functions using measure transport with four compositions of triangular maps (black).}
  \label{sim1_plot}
\end{figure*}

For the two-dimensional simulation studies, we simulated data from the following two intensity functions:
\begin{align}
  \lambda_3(x^{(1)}, x^{(2)}) &= (30 + 10 \sin(10x^{(1)}))(30 + 10 \cos(20x^{(2)})), \nonumber \\
  &\qquad\qquad\qquad\qquad\qquad\qquad 0 < x^{(1)} < 1,\quad 0 < x^{(2)} < 1,\label{sim2_intensity1}\\
\label{sim2_intensity2}
 \lambda_4(x^{(1)}, x^{(2)}) &= 900, \quad 0 < x^{(1)} < 1, \quad 0 < x^{(2)} < 1.
\end{align}
We used the same procedure as in the one dimensional case study whereby we fitted each model to the events simulated from the intensity function. We again simulated and fit 40 times to assess the variability of our estimates. For this experiment we slightly enlarged the domain associated with the point process to reduce problems related to boundary effects. Each model-fitting required approximately ten minutes on a graphics processing unit. 
\\
\\
The average and empirical standard deviation of the $L_2$ distance between the true intensity function and the estimated intensity function are shown in Table \ref{sim2_table} and \ref{sim2_2_table}. As in the one dimensional case, we do not observe substantial differences in the estimates when the number of compositions is varied, and also that the measure transport approach substantially outperformed  kernel density estimation. The true intensity surface, together with the average and standard deviation of the estimated intensity surfaces across the 40 simulations for the intensity functions \eqref{sim2_intensity1} and \eqref{sim2_intensity2}, for the case of four compositions of triangular maps, are shown in Figures \ref{sim2_plot} and \ref{sim2_2_plot}, respectively.  We see from the plots that the proposed method manages to recover the true intensity surface on average, and that the variability in the estimation is large when the true intensity is large. This is expected since the variance of a Poisson random variable is proportional to its mean.
\\\\
In summary, these experiments illustrate that our method based on measure transport is computationally efficient, and that it does not overfit as the number of compositions of triangular maps increases. We have also observed substantial improvement over kernel density estimation for intensity function estimation. Choosing the number of compositions using a formal model selection approach is desirable; however, as quantifying the complexity of the proposed model is difficult, various information criteria such as the Bayesian Information Criterion (BIC) are not applicable. We also compare the running times when fitting the models using a mid-end graphics processing unit (GPU) and a high-end CPU; see Table \ref{comp_time_table}. It is clear that there is a substantial computational benefit to using a GPU when training these models.

\begin{table}[t!]
\caption{Average and empirical standard deviation of the $L_2$ distance between the true and the fitted intensity functions in the two dimensional simulation experiment with intensity function $ \lambda(x^{(1)}, x^{(2)}) = (30 + 10 \sin(10x^{(1)}))(30 + 10 \cos(20x^{(2)})), \quad 0 < x^{(1)} < 1,\quad 0 < x^{(2)} < 1 .$ }
\begin{center}
\begin{tabular}{lcccccc}
   \hline
No. of compositions of triangular maps & 1 & 2& 3 & 4 &5 & KDE  \\ \hline\hline
Average $L_2$ distance & 271.4 & 267.1 & 272.2 & 273.4 & 278.3 & 438.7 \\ 
Standard deviation of $L_2$ distance & 36.7 & 37.9 & 36.3 & 34.6 & 36.7 & 14.1 \\ \hline\hline 
\end{tabular}
\end{center}
\label{sim2_table}
\end{table}

\begin{figure*}[!t]

\centering

  \includegraphics[width=2.4in]{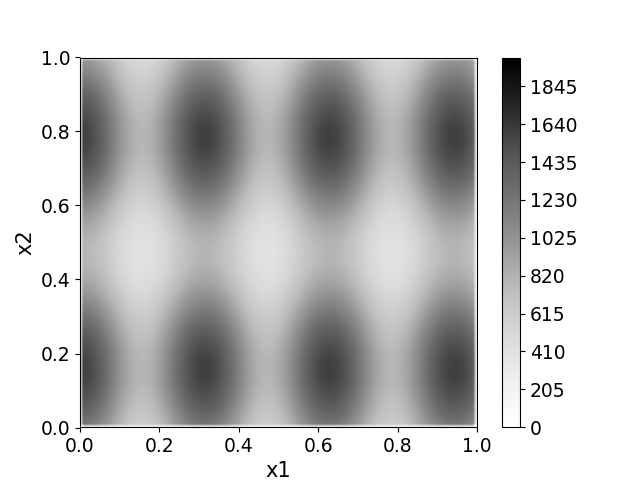}\hspace{.1em}%
  \includegraphics[width=2.4in]{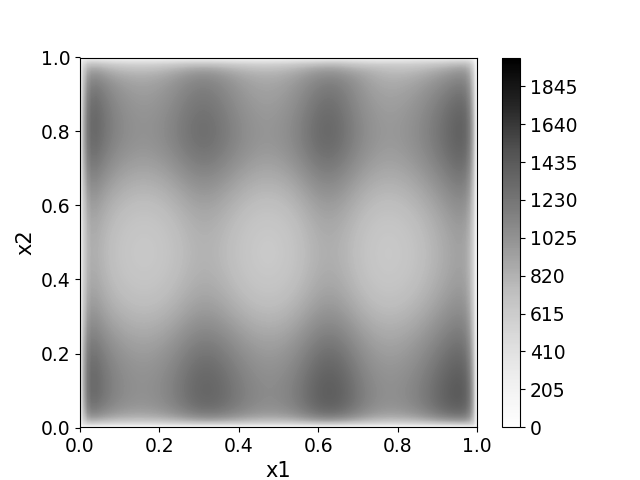}\hspace{.1em}%
  \includegraphics[width=2.4in]{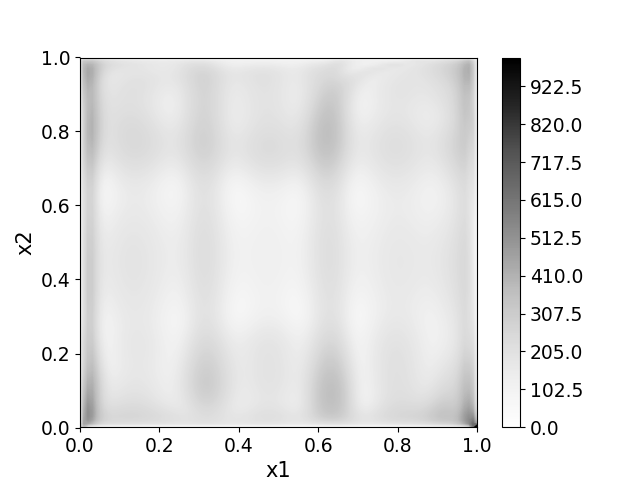}\hspace{.1em}%

\caption{Top-left panel: The true intensity surface \eqref{sim2_intensity1} used to generate 40 point patterns in the two dimensional experiment. Top-right panel: Average estimated intensity surface. Bottom panel: Empirical standard deviation of the estimated intensity surfaces. }
\label{sim2_plot}
\end{figure*}

\begin{table}[t!]
\caption{Average and empirical standard deviation of the $L_2$ distance between the true and the fitted intensity functions in the two dimensional simulation experiment with intensity function $\lambda(x^{(1)}, x^{(2)}) = 900, \quad 0 < x^{(1)} < 1, \quad 0 < x^{(2)} < 1$.}
\begin{center}
\begin{tabular}{lcccccc}
   \hline
No. of compositions of triangular maps & 1 & 2& 3 & 4 &5 & KDE  \\ \hline\hline
Average $L_2$ distance & 145.7 & 144.6 & 141.2 & 145.7 & 144.9 & 227.9 \\ 
Standard deviation of $L_2$ distance & 7.5 & 7.2 & 7.1 & 8.4 & 7.3 & 14.2 \\ \hline\hline 
\end{tabular}
\end{center}
\label{sim2_2_table}
\end{table}

\begin{figure*}[!t]

\centering

  \includegraphics[width=2.4in]{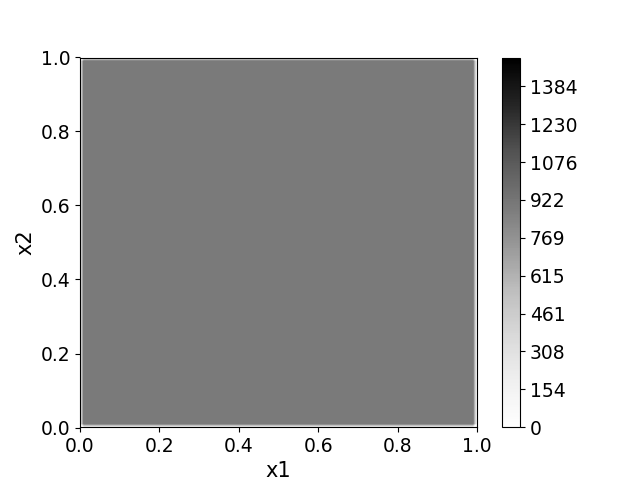}\hspace{.1em}%
  \includegraphics[width=2.4in]{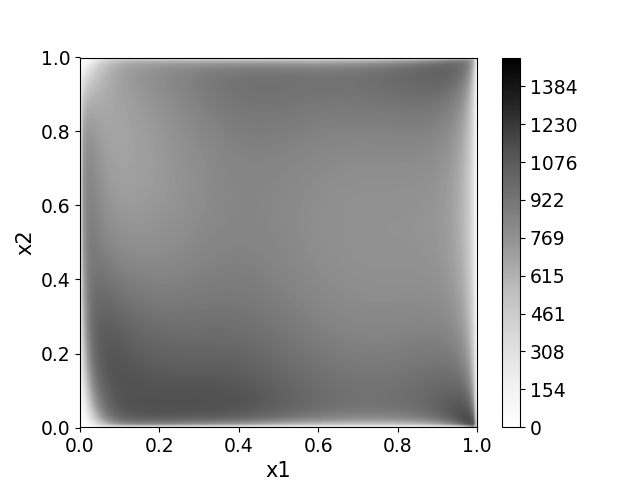}\hspace{.1em}%
  \includegraphics[width=2.4in]{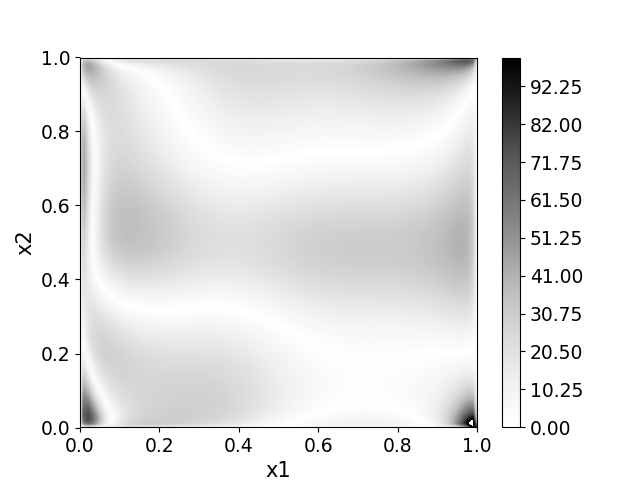}\hspace{.1em}%

\caption{Top-left panel: The true intensity surface \eqref{sim2_intensity2} used to generate 40 point patterns in the two dimensional experiment. Top-right panel: Average estimated intensity surface. Bottom panel: Empirical standard deviation of the estimated intensity surfaces. }
\label{sim2_2_plot}
\end{figure*}

\begin{table}[t!]
\caption{Average computational time when fitting the model using a GPU (NVIDIA GeForce GTX 1080Ti) and a CPU (Intel(R) Core(TM) i9-7900X CPU @ 3.30GHz).}
\begin{center}
\begin{tabular}{lcccc}
   \hline
Intensity function & $\lambda_1$ & $\lambda_2$ & $\lambda_3$ & $\lambda_4$  \\ \hline\hline
GPU Time (seconds) & 39.95 & 39.42 & 57.74 & 58.23  \\  
CPU Time (seconds) & 80.92 & 57.33 & 146.72 & 137.48  \\  \hline\hline 
\end{tabular}
\end{center}
\label{comp_time_table}
\end{table}

\subsection{Modeling Earthquake Data}
\label{sec_quake}
In this section we apply our method for intensity function estimation to an earthquake data set comprising 1000 seismic events of body-wave magnitude (MB) over 4.0. The data set is available from the \texttt{R} \emph{datasets} package. The events we analyze are those that occurred near Fiji from 1964 onwards. The left panel of Figure \ref{quakes_est} shows a scatter plot of locations of the observed seismic events. 
\\\\
We fitted our model using a composition of five triangular maps. The estimated intensity surface and the standard error surface obtained using Algorithm~\ref{algo_boot_np} are shown in the middle and right panels of Figure \ref{quakes_est}, respectively. As was observed in the simulation experiments, we see that the estimated standard error is large in areas where the estimated intensity is high. The probability that the intensity function exceeds various threshold can also be estimated using non-parametric bootstrap resampling; some examples of these exceedance probability plots are shown in Figure \ref{quakes_exceed_plot}.
\\\\

\begin{figure*}[!t]
\centering
\includegraphics[width=2.45in]{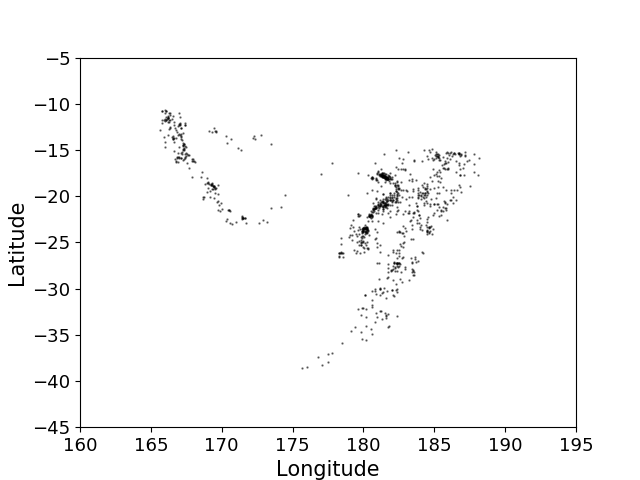}\hspace{.1em}%
\includegraphics[width=2.45in]{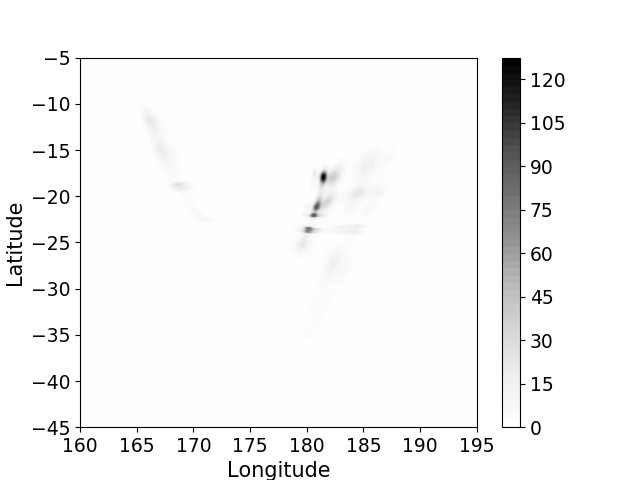}\hspace{.1em}%
\includegraphics[width=2.45in]{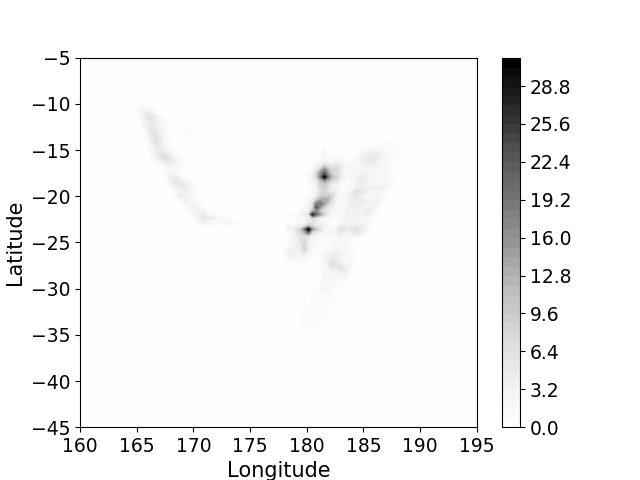}\hspace{.1em}%
\caption{Top-left panel: Scatter plot of earthquake events with body-wave magnitude greater than 4.0 near Fiji since 1964. Top-right panel: Estimated intensity function obtained using measure transport.  Bottom panel: Estimated standard error of the intensity surface obtained using Algorithm~\ref{algo_boot_np}.}
\label{quakes_est}
\end{figure*}

\begin{figure*}[!t]
\centering

  \includegraphics[width=2.45in]{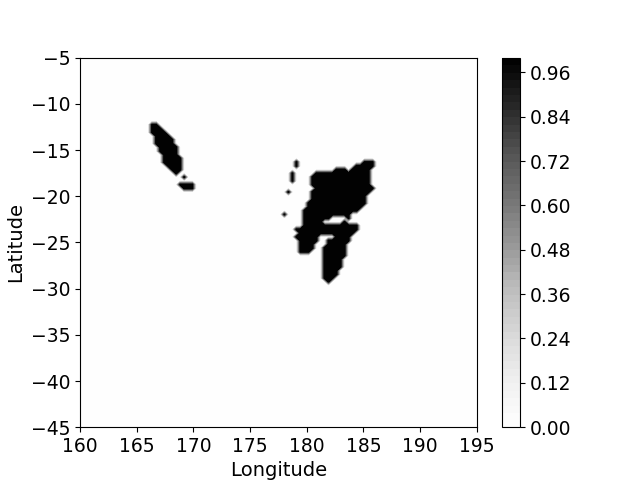}\hspace{.1em}%
  \includegraphics[width=2.45in]{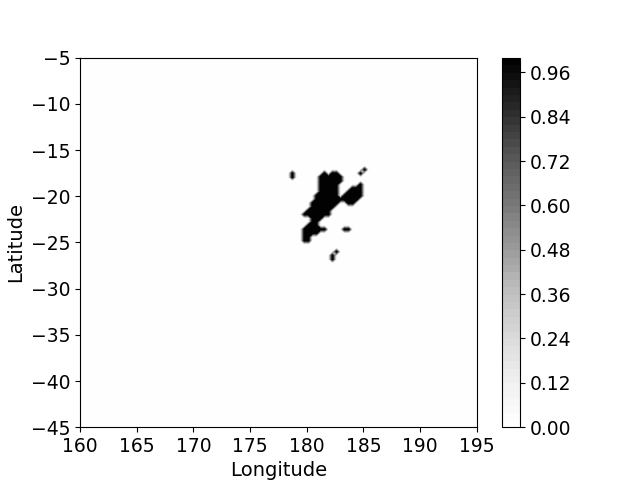}\hspace{.1em}%
  \includegraphics[width=2.45in]{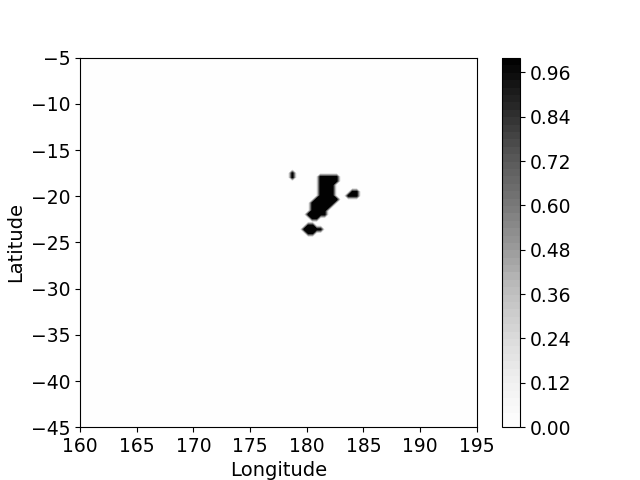}\hspace{.1em}%

\caption{Top-left panel: Estimated exceedance probability $P(\lambda(\cdot) > 1)$. Top-right panel: Estimated exceedance probability $P(\lambda(\cdot) > 5)$. Bottom panel: Estimated exceedance probability $P(\lambda(\cdot) > 10)$. }
\label{quakes_exceed_plot}
\end{figure*}

A ubiquitous model used in such applications is the log-Gaussian Cox process (LGCP). For comparative purposes, here we fit an LGCP using the package \texttt{inlabru} \citep{Bachl_2019}, with the latent Gaussian process equipped with a constant (unknown) mean and a Mat{\'e}rn covariance function with smoothness parameter $\nu = 1$. The Gaussian process was approximated via a stochastic partial differential equation \citep{Lindgren_2011} on a mesh comprising 2482 vertices. Approximate inference and prediction required about two minutes on a fine grid comprising 15262 pixels.
\\\\
For both our intensity-function estimate, and the posterior median intensity function from the LGCP, we compute a QQ-plot. In this QQ-plot we use the horizontal axis to represent the fitted quantiles from the density estimate and the vertical axis to represent the empirical quantiles obtained from the observational data; the identity line is used to denote perfect agreement. We see from Figure~\ref{KS_plot} that the intensity function estimates using both models are reasonable, with that obtained using measure transport slightly better. The Kolmogorov-Smirnov statistic for our approach is 0.039 while that from the LGCP is 0.048.

\begin{figure*}
  \centering
  \includegraphics[width = 0.5\linewidth]{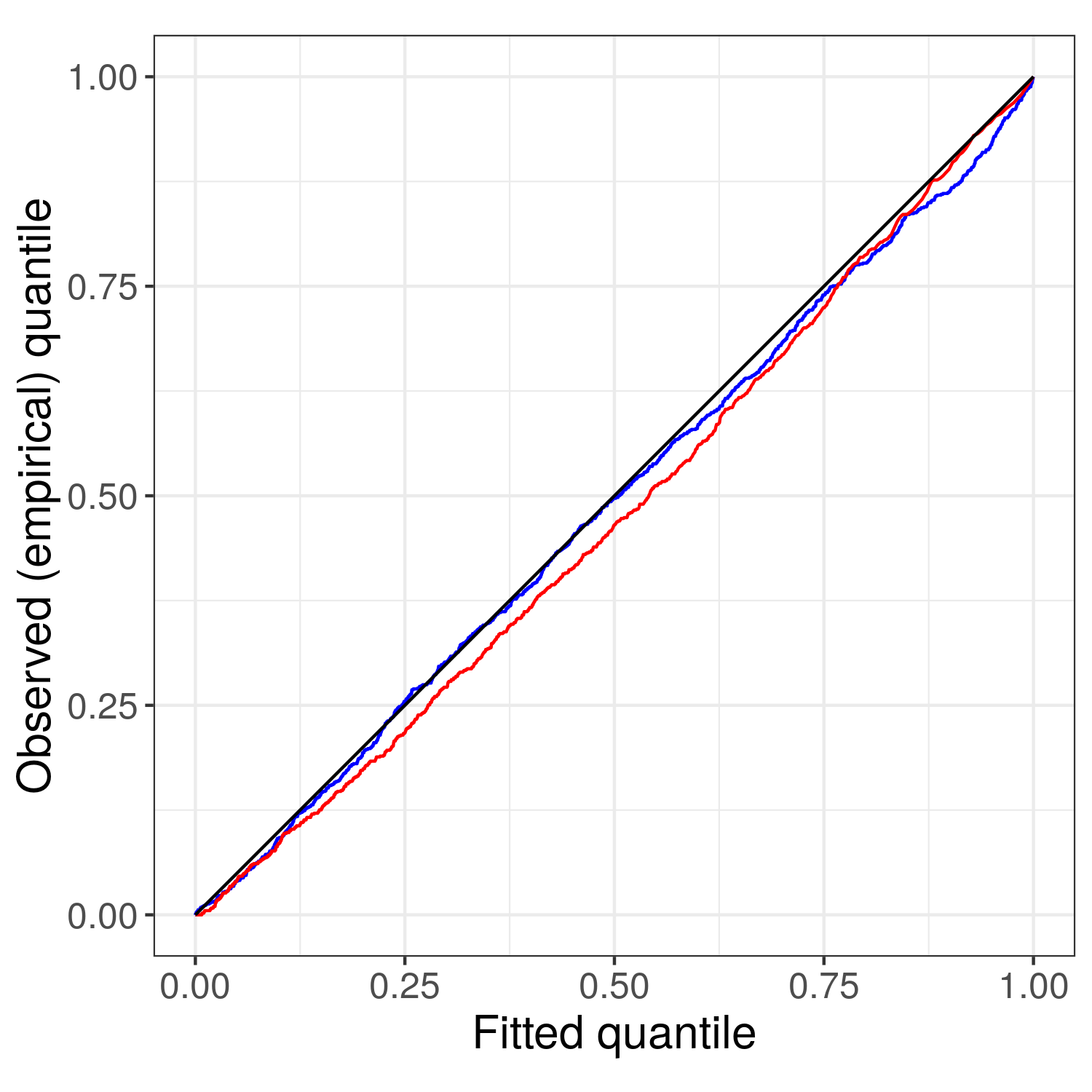}
  \caption{QQ-plot comparing the fitted quantiles (from the intensity function estimates) to the observed empirical quantiles. The blue line corresponds to the intensity function estimate obtained from measure transport, the red line to the posterior median from an LGCP fitted using \texttt{inlabru}. The black line denotes perfect fit. }
  \label{KS_plot}
\end{figure*}

\section{Conclusion}
\label{conc_sec}
This paper develops a general and scalable approach to the problem of modeling and estimating the intensity function of a non-homogeneous Poisson process. We leverage the measure transport framework through compositions of triangular maps to model the unknown intensity function, and utilize software libraries originally created for deep learning for efficient inference. The developed model is shown to have the universal property whereby any positive continuous intensity function can be approximated arbitrarily well. 
\\\\
Our experiments clearly demonstrate the practical advantage of the measure transport approach over simpler methods such as KDE. The performance of the proposed method is also competitive as compared to the use of LGCPs. Furthermore, the measure transport approach has other amenable properties. Notably, the use of a simple reference density allows one to easily simulate point processes, and back transform the coordinates to the original space, with little effort. This leads to an efficient simulation algorithm, as well as an efficient bootstrap algorithm for uncertainty quantification. Second, our approach has the potential to recover spatial properties (such as anisotropy and nonstationarity), that would require additional modeling effort with models such as the LGCP, or more sophisticated kernels with KDE. Finally, our approach is highly scalable, and can be extended to higher dimensional spaces with no modification to the underlying software.
\\\\
There are several possible avenues for future work. First, in this work we have only considered low-dimensional spatial problems. However, the measure transport approach naturally extends to higher-dimensional spaces. For spatio-temporal point processes, for example, one could simply add an additional, temporal, dimension to the two-dimensional spatial model. Second, a simple way to incorporate covariate information, which is not as straightforward as in an LGCP, say, will be important for the approach to find wide applicability in a practical setting. 
\\\\
Code to reproduce the results in the simulation and real-data illustrations is available as supplementary material.

\section*{Acknowledgements}

AZ--M was supported by the Australian Research Council (ARC) Discovery Early Career Research Award, DE180100203. The authors would like to thank Noel Cressie for helpful discussions on bootstrapping. The authors would like to thanks the editors and reviewers for helpful suggestions which have significantly improved the paper.

\appendix
\section{Proof of Results}

\label{Appendix}

\subsection{Proof of Proposition \ref{kl_prop}}
\label{proofKLsec}


\begin{proof}
By definition of the KL divergence,
$$  D_{KL}(f_{\rho_{1}}||f_{\rho_{2}}) = \mathbb{E}_{\rho_{1}} \bigg\{  \log \frac{ f_{\rho_{1}}(x)}{ f_{\rho_{2}}(x) } \bigg\}, $$
where $\mathbb{E}_{\rho_1}\{\cdot\}$ is the expectation taken with respect to the density $f_{\rho_1}(\cdot)$. By Campbell's theorem, we have that 
$$ \mathbb{E}_{\rho_i} \Big\{ \sum_{x \in {\cal P}_i} \log {\rho_j}(x) \Big\} = \int_{{\cal X}} (\log {\rho_j}(x) ) \rho_i(x) \intd x, \quad i,j = 1,2.$$ 
Therefore, from \eqref{fdensity},  
$$  \mathbb{E}_{\rho_{1}} \{ \log f_{\rho_{1}}(x) \} = - \int_{\cal X} (\rho_{1}(x) - 1) \intd x +  \int_{\cal X} (\log \rho_{1}(x) ) \rho_{1}(x) \intd x, $$
$$ \mathbb{E}_{\rho_{1}} \{ \log f_{\rho_{2}}(x) \} = - \int_{\cal X} (\rho_{2}(x) - 1) \intd x +  \int_{\cal X} (\log \rho_{2}(x) ) \rho_{1}(x) \intd x.$$
Combining these two equalities completes the proof.
\end{proof}

\subsection{Proof of Theorem \ref{uni_thm}}
\label{proof1sec}
The first lemma we need is Lemma 2.6 of \cite{Bogachev2005} stated in a slightly different form. 
\begin{lemma}
\label{lemma_exist}
Suppose the probability measures $\mu(\cdot)$ and $\nu(\cdot)$ on $\mathbb{R}^{d}$ are given by continuous positive densities $\rho_{\mu}(\cdot)$ and $\rho_{\nu}(\cdot)$, respectively, whose weak (Sobolev) partial derivatives up to order $d+1$ are integrable over $\mathbb{R}^{d}$. Then there exists an increasing continuously differentiable triangular mapping $\tilde{T}_{\#\mu}(\cdot)$ such that
$$ \tilde{T}_{\#\mu}(\cdot) = \nu(\cdot) .$$
\end{lemma}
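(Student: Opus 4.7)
The plan is to construct $\tilde{T}(\cdot)$ explicitly using the Knothe--Rosenblatt rearrangement, which is the canonical increasing triangular pushforward between two absolutely continuous probability measures on $\mathbb{R}^{d}$. Existence then reduces to giving a well-defined construction, and the assertion that $\tilde{T}$ is $C^{1}$ will follow from the Sobolev hypotheses on $\rho_{\mu}$ and $\rho_{\nu}$.

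First I would handle the base case $k=1$ via the one-dimensional marginal. Let $\rho_{\mu}^{(1)}(x^{(1)}) := \int_{\mathbb{R}^{d-1}} \rho_{\mu}(x^{(1)}, u)\,\intd u$ and define $\rho_{\nu}^{(1)}$ analogously; denote the associated CDFs by $F_{\mu}^{(1)}$ and $F_{\nu}^{(1)}$. Set
$$ \tilde{T}^{(1)}(x^{(1)}) := (F_{\nu}^{(1)})^{-1}\bigl(F_{\mu}^{(1)}(x^{(1)})\bigr). $$
Positivity and continuity of the marginals make both CDFs strictly increasing $C^{1}$ bijections, so $\tilde{T}^{(1)}$ is increasing and continuously differentiable. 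For $k \ge 2$ I would proceed inductively. Having defined $\tilde{T}^{(1)},\ldots,\tilde{T}^{(k-1)}$, write $y^{(j)} = \tilde{T}^{(j)}(x^{(1)},\ldots,x^{(j)})$ for $j<k$, and introduce the conditional densities
$$ \rho_{\mu}^{(k\mid 1:k-1)}(x^{(k)} \mid x^{(1)},\ldots,x^{(k-1)}) := \frac{\rho_{\mu}^{(1:k)}(x^{(1)},\ldots,x^{(k)})}{\rho_{\mu}^{(1:k-1)}(x^{(1)},\ldots,x^{(k-1)})}, $$
and analogously $\rho_{\nu}^{(k\mid 1:k-1)}(\cdot \mid y^{(1)},\ldots,y^{(k-1)})$. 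The $k$-th component $\tilde{T}^{(k)}(x^{(1)},\ldots,x^{(k)})$ is then defined as the one-dimensional inverse-CDF transform sending the $\mu$-conditional density onto the $\nu$-conditional density. A direct change-of-variables verification of $\tilde{T}_{\#\mu} = \nu$ follows by factorising joint densities as products of conditionals, while triangularity and monotonicity in $x^{(k)}$ are immediate from the construction.

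The main obstacle is establishing $C^{1}$ regularity of $\tilde{T}^{(k)}$ jointly in all its arguments, not merely in $x^{(k)}$. Pointwise in the conditioning variables, strict monotonicity and $C^{1}$ dependence in $x^{(k)}$ are clear from positivity and continuity of the conditional densities. The subtlety lies in obtaining continuous differentiability in $x^{(1)},\ldots,x^{(k-1)}$. This is precisely where the Sobolev hypothesis is used: integrability of weak partial derivatives of $\rho_{\mu}$ and $\rho_{\nu}$ up to order $d+1$, combined with Sobolev embedding, implies that the densities, and hence their partial marginalisations and conditional densities, are classically continuously differentiable. An application of the implicit function theorem to the defining equation
$$ F_{\nu}^{(k\mid 1:k-1)}\bigl(\tilde{T}^{(k)} \mid y^{(1)},\ldots,y^{(k-1)}\bigr) = F_{\mu}^{(k\mid 1:k-1)}\bigl(x^{(k)} \mid x^{(1)},\ldots,x^{(k-1)}\bigr), $$
combined with the induction hypothesis that $y^{(1)},\ldots,y^{(k-1)}$ are $C^{1}$ functions of $x^{(1)},\ldots,x^{(k-1)}$, then yields continuous differentiability of $\tilde{T}^{(k)}$ in all its inputs. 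Iterating for $k=1,\ldots,d$ delivers the desired increasing continuously differentiable triangular map, which is exactly the content of Lemma~2.6 of \cite{Bogachev2005}.
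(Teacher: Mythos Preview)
The paper does not supply its own proof of this lemma: it simply records the statement as ``Lemma 2.6 of \cite{Bogachev2005} stated in a slightly different form'' and moves on. Your proposal therefore cannot be compared against a proof in the paper, because there is none; you are filling in what the paper deliberately outsources to the literature.

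That said, your sketch is exactly the standard argument behind the cited result. The Knothe--Rosenblatt rearrangement via iterated conditional inverse-CDF transforms is precisely how the increasing triangular pushforward is constructed in \cite{Bogachev2005}, and your identification of the Sobolev hypothesis as the ingredient that upgrades the map from merely measurable/monotone to $C^{1}$ (via $W^{d+1,1}(\mathbb{R}^{d}) \hookrightarrow C^{1}$, passage to marginals and conditionals, and the implicit function theorem) is the right mechanism. One point worth tightening if you expand this into a full proof: the Sobolev embedding and the implicit function theorem are applied locally, and you rely on the marginal densities in the denominators of the conditional densities being bounded away from zero. On all of $\mathbb{R}^{d}$ this need not hold, so the $C^{1}$ conclusion is really local (on compacta), which is in fact all that is used downstream in the paper. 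With that caveat, your argument is correct and aligned with the source the paper cites.
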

Lemma \ref{lemma_exist} implies that it is sufficient to consider the space of increasing continuous differentiable triangular maps when one is seeking to push forward the measure $\mu(\cdot)$ to another, usually simpler, reference measure $\nu(\cdot)$. In this work, we fix the reference measure $\nu(\cdot)$ to be standard multivariate Gaussian distribution. The following two lemmas show that the triangular maps constructed using the neural autoregressive flows are indeed dense in the space of increasing continuous differentiable triangular maps.

\begin{lemma}
\label{lemma_uni}
The set of functions $$\left\{h: \mathbb{R} \rightarrow (0,1), h(x) = \sum_{i=1}^M w_i \sigma( a_i x  + b_i) \bigg|  M \in \mathbb{N};\, a_i > 0~ \forall i;\, b_i \in \mathbb{R}~ \forall i;\, w_i >0~ \forall i;\, \sum_{i=1}^{M} w_i = 1 \right\},$$ is dense in the space of monotonically increasing continuous differentiable functions $f: \mathbb{R} \rightarrow (0,1)$ with $f(t) \rightarrow 0$ as $t \rightarrow -\infty$ and $f(t) \rightarrow 1$ as $t \rightarrow \infty$ with respect to the norm
$$ || f ||_{ {\cal C}^{1}(I) } := \max_{k=0,1} \max_{t \in I} |f^{[k]}(t) | ,$$
on compact intervals $I=[I_0,I_1] \subset \mathbb{R}$.
\end{lemma}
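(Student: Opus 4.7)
The plan is to identify $f$ with the CDF of a probability measure on $\mathbb{R}$ and to approximate that measure by a finite mixture of logistic distributions. Since $f$ is $C^1$, monotonically increasing, with $f(-\infty)=0$ and $f(+\infty)=1$, the derivative $f'$ is a nonnegative continuous function on $\mathbb{R}$ integrating to $1$, so $F(B) := \int_B f'(t)\,dt$ is a Borel probability measure on $\mathbb{R}$ with $f(x) = F((-\infty,x]) = \int \mathbf{1}\{t \le x\}\,dF(t)$. Any candidate $h(x) = \sum_{i=1}^M w_i\,\sigma(a(x - \tau_i))$ (weights $w_i > 0$ summing to $1$, common scale $a > 0$, biases $b_i = -a\tau_i$) is itself the CDF of a mixture of logistic distributions of scale $1/a$ centered at the $\tau_i$. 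I will construct such a mixture whose associated distribution is close enough to $F$ that both the mixture CDF approximates $f$ and the mixture density approximates $f'$, uniformly on $I$.

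The construction will proceed in two steps. First I will mollify: for large $a > 0$, define
\[
f_a(x) := \int_{\mathbb{R}} \sigma(a(x-t))\,dF(t), \qquad f_a'(x) = \int_{\mathbb{R}} \phi_a(x-t)\,dF(t),
\]
where $\phi_a(s) := a\,\sigma(as)(1-\sigma(as))$ is the logistic density of scale $1/a$, an approximate identity. The elementary bound $|\sigma(a(x-t)) - \mathbf{1}\{t<x\}| \le e^{-a|x-t|}$, integrated against $dF$ and split into the regions $|x-t|\le\delta_0$ (controlled by continuity of $f'$ near $I$) and $|x-t|>\delta_0$ (controlled by the $e^{-a\delta_0}$ decay), gives $f_a \to f$ uniformly on $I$. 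The standard approximate-identity argument---splitting $f'\ast\phi_a - f'$ analogously, using uniform continuity of $f'$ on a compact neighborhood of $I$ in one region and the tail bound $\int_{|s|>\delta_0}\phi_a(s)\,ds = 2\sigma(-a\delta_0)$ in the other---gives $f_a' \to f'$ uniformly on $I$. Hence $\|f_a - f\|_{C^1(I)} \to 0$ as $a \to \infty$.

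Second, with $a$ fixed large, I will pick $R > 0$ with $I \subset (-R,R)$, partition $[-R,R]$ into sub-intervals $J_1,\dots,J_N$ of mesh at most $\delta$, choose representatives $t_i \in J_i$, set $w_i := F(J_i)$, and augment with $w_0 := F((-\infty,-R])$ at a distant node $t_0 \ll -R$ and $w_{N+1} := F((R,\infty))$ at $t_{N+1} \gg R$. Then
\[
h(x) := \sum_{i=0}^{N+1} w_i\,\sigma(a(x - t_i))
\]
has $w_i \ge 0$, $\sum_i w_i = F(\mathbb{R}) = 1$, and all scales equal $a > 0$; after pruning zero-weight bins, $h$ is exactly of the form prescribed in the lemma with $a_i = a$ and $b_i = -a t_i$. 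The discretization error $\|h - f_a\|_{C^1(I)}$ on $[-R,R]$ will be bounded by $\delta$ times the spatial Lipschitz constants in $t$ of $\sigma(a(x-t))$ and $\phi_a(x-t)$, which are of order $a$ and $a^2$ respectively; the two extreme-bin contributions will be controlled by sending $|t_0|,\,t_{N+1}\to\infty$, using $\sigma(a(x-t_0))\to 1$ and $\sigma(a(x-t_{N+1}))\to 0$ uniformly on $I$. A triangle inequality $\|f - h\|_{C^1(I)} \le \|f - f_a\|_{C^1(I)} + \|f_a - h\|_{C^1(I)}$ will close the argument: choose $a$ large, then $\delta$ small with $\delta = O(a^{-2})$, then $|t_0|,t_{N+1}$ large. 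The main obstacle will be this quantifier ordering---matching derivatives forces a narrow kernel (large $a$), and the resulting steepness forces the Riemann mesh to be much finer than the kernel itself---so the $C^1$ estimates must be assembled in a specific order. A minor subtlety I will also need to address is that $F$ need not have full support, so strict positivity $w_i > 0$ requires the pruning step.
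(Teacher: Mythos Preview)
Your proposal is correct and follows a genuinely different route from the paper. Both arguments begin by recognising $f$ as a CDF with density $f'$ and seek a logistic mixture whose density approximates $f'$ in sup norm on a compact set, then recover the $C^0$ bound on $f-h$ by integration. The paper, however, outsources the density approximation entirely: it invokes Lemma~3.1 of Nestoridis et al.\ (2007) to assert the existence of $h$ with $|f'-h'|<\epsilon'$ on a compact interval, and then spends its effort on the integration step (bounding $|f(t)-h(t)|$ via the endpoints $s_0,s_1$ where $f$ equals $\epsilon$ and $1-\epsilon$). Your two-stage mollify--then--discretize construction replaces that citation with an explicit, elementary argument, and as a bonus produces an $h$ with a \emph{common} scale $a_i\equiv a$, which is slightly stronger than what the lemma asks for. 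The trade-off is length: the paper's proof is a few lines plus a reference, while yours must track the approximate-identity and Riemann-sum errors by hand.

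One small point to tighten in execution: your quantifier list is ``choose $a$, then $\delta$, then $|t_0|,t_{N+1}$,'' but $R$ also needs to sit in that chain. With $R$ chosen only to satisfy $I\subset(-R,R)$, the tail integrals $\int_{-\infty}^{-R}\sigma(a(x-t))\,dF(t)$ and $\int_{-\infty}^{-R}\phi_a(x-t)\,dF(t)$ are not driven to the limits $w_0$ and $0$ merely by moving $t_0$; you need either $R$ large (so $w_0,w_{N+1}$ are small and all four tail terms in $f_a$ and $h$ are individually small) or $R$ chosen after $a$ (so $\sigma(a(I_0+R))\approx 1$). Either fix is routine, but make the dependence explicit when you write it out.
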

\begin{proof}
Fix a sufficiently small $\epsilon > 0$. Let $f: \mathbb{R} \rightarrow (0,1)$ be a monotonically increasing ${\cal C}^{1}$ function with $f(t) \rightarrow 1$ as $t \rightarrow \infty$ and $f(t) \rightarrow 0$ as $t \rightarrow - \infty$. Therefore, $f'(t)$ is a positive continuous probability density function on $\mathbb{R}$. Now, for any $a_i >0$ and $b_i \in \mathbb{R}$, $\sigma_i(t) := \sigma(a_i t + b_i)$ satisfies $\sigma_i(t) \rightarrow 1$ as $t \rightarrow \infty$ and $\sigma_i(t) \rightarrow 0$ as $t \rightarrow -\infty$. Therefore, $\sigma'_{i}(t)$ is a positive continuous density on $\mathbb{R}$. Therefore, by standard results in approximation theory (see, e.g., Section 4 of \cite{Nestoridis2007}), for any $\epsilon' >0$ and any compact interval $K \subset \mathbb{R}$, there exists 
$$ h(t) = \sum_{i=1}^{N} w_i \sigma(a_i t + b_i) ,$$
for some $a_i > 0, b_i \in \mathbb{R}, i=1, \ldots, N,$ such that
$$ |f'(t) - h'(t)| < \epsilon' ,$$
for any $t \in K$. In particular, the above result follows from Lemma 3.1 of \cite{Nestoridis2007}, and we note that for positive continuous probability density $h$, the weights $w_i$ can be chosen such that $\sum_{i=1}^{N} w_i = 1$.
\\\\
Fix $s_0, s_1$ such that $f(s_0) = \epsilon$ and $f(s_1) = 1 - \epsilon$. For $\epsilon$ sufficiently small we have that $I \subset [s_0, s_1]$. Let $\epsilon' = \epsilon / (s_1 - s_0)$, we have that $|f'(t) - h'(t)| < \epsilon' $ for all $t \in [s_0, s_1]$. Therefore,
$$ \bigg| \int_{s_0}^{s_1} f'(s) - h'(s) \intd s \bigg| \le \int_{s_0}^{s_1} |f'(s) - h'(s)| \intd s < \epsilon .$$
Using the inequality above along with $f(s_1) - f(s_0) = 1 - 2 \epsilon$, it is straightforward to deduce that $f(s_0) < 3 \epsilon$. Now, for any $t \in I$, we have
\begin{eqnarray*}
 |f(t) - h(t) | &=& \bigg| \int_{-\infty}^{t} f'(t) - h'(t) \intd t  \bigg| \\
         & \le & \bigg| \int_{-\infty}^{s_0} f'(t) - h'(t) \intd t  \bigg| + \bigg| \int_{s_0}^{s} f'(t) - h'(t) \intd t \bigg|   \\
         & \le & 2 \epsilon + \epsilon' (s - s_0) \\
         & \le & 3 \epsilon .
\end{eqnarray*}
Therefore, we have that $ \| f(\cdot) - h(\cdot) \|_{ {\cal C}^{1}(I) } \le 3 \epsilon $.
\end{proof}

\begin{lemma}
\label{lemma_inv}
The set of functions
$$\left\{g: \mathbb{R} \rightarrow \mathbb{R},  g(x) := \sigma^{-1} \Big( \sum_{i=1}^{M} w_i \sigma( a_i x + b_i) \Big) \bigg|  M \in \mathbb{N}, a_i > 0~ \forall i;\, b_i \in \mathbb{R}~ \forall i;\, w_i >0~ \forall i;\, \sum_{i=1}^{M} w_i = 1 \right\},$$ is dense in the space of monotonically increasing continuous differentiable functions $f: \mathbb{R} \rightarrow \mathbb{R}$ with $f(t) \rightarrow \infty$ as $t \rightarrow \infty$ and $f(t) \rightarrow -\infty$ as $t \rightarrow -\infty$ with respect to the norm
$$ || f ||_{ {\cal C}^{1}(I) } = \max_{k=0,1} \max_{t \in I} |f^{[k]}(t) | ,$$
on compact intervals $I = [I_0, I_1] \subset \mathbb{R}$.
\end{lemma}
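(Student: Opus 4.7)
\textbf{Proof proposal for Lemma \ref{lemma_inv}.} The plan is to reduce this density statement directly to Lemma \ref{lemma_uni} by pre- and post-composing with the sigmoid. Given a monotonically increasing $C^1$ function $f:\mathbb{R}\to\mathbb{R}$ with $f(t)\to\pm\infty$ as $t\to\pm\infty$, set $\tilde f := \sigma\circ f$. Then $\tilde f$ is $C^1$, strictly increasing, maps $\mathbb{R}\to(0,1)$, and satisfies $\tilde f(t)\to 0$ and $\tilde f(t)\to 1$ at $\mp\infty$. Hence $\tilde f$ lies in the class to which Lemma \ref{lemma_uni} applies, and for any compact interval $I=[I_0,I_1]\subset\mathbb{R}$ and any $\epsilon'>0$ we can find $h(t)=\sum_{i=1}^M w_i\sigma(a_i t+b_i)$ with $a_i>0$, $b_i\in\mathbb{R}$, $w_i>0$, $\sum_i w_i=1$, such that $\|\tilde f-h\|_{\mathcal{C}^1(I)}<\epsilon'$. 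The candidate approximant is then $g:=\sigma^{-1}\circ h$, which has precisely the functional form in the statement.

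The remaining task is to convert the $C^1$ closeness of $h$ to $\tilde f$ on $I$ into $C^1$ closeness of $g=\sigma^{-1}\circ h$ to $f=\sigma^{-1}\circ\tilde f$. Because $f$ is continuous on the compact set $I$, there exist constants $m<M$ such that $f(I)\subset[m,M]$, so $\tilde f(I)\subset[\sigma(m),\sigma(M)]\subset(0,1)$. Choosing $\epsilon'$ small enough forces $h(I)$ into a slightly enlarged compact subinterval $[\alpha,\beta]\subset(0,1)$. This is the crucial step, and essentially the only obstacle: the logit $\sigma^{-1}$ has derivative blowing up near $0$ and $1$, so without this confinement the post-composition would not preserve the approximation. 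Once $h(I)\subset[\alpha,\beta]$, however, both $\sigma^{-1}$ and $(\sigma^{-1})'$ are Lipschitz on $[\alpha,\beta]$ with constants depending only on $\alpha,\beta$.

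Having secured this, the two required bounds follow by routine composition estimates. For the values,
\[
|g(t)-f(t)|=|\sigma^{-1}(h(t))-\sigma^{-1}(\tilde f(t))|\le L_0\,|h(t)-\tilde f(t)|\le L_0\epsilon',
\]
where $L_0=\sup_{[\alpha,\beta]}|(\sigma^{-1})'|$. For the derivatives, writing
\[
g'(t)-f'(t)=(\sigma^{-1})'(h(t))\bigl[h'(t)-\tilde f'(t)\bigr]+\bigl[(\sigma^{-1})'(h(t))-(\sigma^{-1})'(\tilde f(t))\bigr]\tilde f'(t),
\]
one bounds the first term by $L_0\epsilon'$ and, using the Lipschitz constant $L_1$ of $(\sigma^{-1})'$ on $[\alpha,\beta]$ together with $\sup_I|\tilde f'|<\infty$, the second term by $L_1(\sup_I|\tilde f'|)\epsilon'$. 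Taking $\epsilon'$ small enough relative to a prescribed target $\epsilon>0$ then yields $\|g-f\|_{\mathcal{C}^1(I)}<\epsilon$, completing the proof. The entire argument is thus a short reduction to Lemma \ref{lemma_uni} together with the observation that $\sigma$ and $\sigma^{-1}$ are $C^1$-diffeomorphisms between $\mathbb{R}$ and $(0,1)$ when restricted to compacta bounded away from the endpoints.
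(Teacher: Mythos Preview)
Your proposal is correct and follows essentially the same route as the paper: reduce to Lemma~\ref{lemma_uni} by setting $\tilde f=\sigma\circ f$, approximate $\tilde f$ in $\mathcal{C}^1(I)$ by a convex sigmoid mixture $h$, and then transfer the estimate through $\sigma^{-1}$ using bounds on $(\sigma^{-1})'$ and $(\sigma^{-1})''$ on a compact subinterval of $(0,1)$. Your treatment of the confinement step (forcing $h(I)\subset[\alpha,\beta]\subset(0,1)$ before applying the Lipschitz bounds) is in fact slightly more explicit than the paper's, which leaves this point implicit in the choice of $c_{\min},c_{\max}$.
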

\begin{proof}
Fix any interval $I = [I_0, I_1]$, and sufficiently small $\epsilon >0$. Choose $c_{min} \in (0, \sigma \circ f(I_0))$, and $c_{max} \in (\sigma \circ f(I_1), 1)$. Since $\sigma^{-1}(\cdot) \in {\cal C}^{2}$, there exists $K_1, K_2 >0$ such that
$$ \sup_{y \in [c_{min},c_{max}]} | \nabla\sigma^{-1}(y) | < K_1 < +\infty ,$$
$$ \sup_{y \in [c_{min},c_{max}]} | \nabla^{2} \sigma^{-1}(y) | < K_2 < +\infty .$$
Since $\sigma \circ f(\cdot)$ is ${\cal C}^1$ and monotonic increasing with $\sigma \circ f(t) \rightarrow 1$ as $t \rightarrow \infty$ and $\sigma \circ f(t) \rightarrow 0$ as $t \rightarrow -\infty$, we have by Lemma \ref{lemma_uni}, there exists a function $h(\cdot)$ with the form $h(t) = \sum_{i=1}^{N} w_i \sigma(a_i t + b_i)$, such that
$$ |\sigma \circ f(t) - h(t)| < \epsilon / K ,$$
$$   |(\sigma \circ f)'(t) - h'(t)| < \epsilon / K,$$ 
where $K = \max\{K_1, K_2\}$. Therefore, for $t \in I$ we have
\begin{eqnarray*}
  |f(t) -  \sigma^{-1} \circ h(t) | &=& |\sigma^{-1} \circ \sigma \circ f(t) - \sigma^{-1} \circ h(t) | \\
    &\le& \sup_{y \in [c_{min},c_{max}]} | \nabla\sigma^{-1}(y) | | \sigma \circ f(t) - h(t) | \\
    &\le& K \frac{\epsilon}{K} \\
   &=& \epsilon,
\end{eqnarray*}
where the first inequality follows from the mean value theorem. Similarly,
\begin{eqnarray*}
& & |f'(t) - (\sigma^{-1} \circ h)'(t)| =  | (\sigma^{-1} \circ \sigma \circ f)'(t) - (\sigma^{-1} \circ h)'(t) |  \\
&\le & |(\sigma^{-1} \circ \sigma \circ f)'(t) - (\sigma^{-1})'( \sigma \circ f(t) ) h'(t) | + |(\sigma^{-1})'( \sigma \circ f(t) ) h'(t)  -  (\sigma^{-1} \circ h)'(t)|  \\
& \le & \bigg\{ \sup_{t \in I} |(\sigma^{-1})'( \sigma \circ f(t) )|   \bigg\} | (\sigma \circ f)'(t) - h'(t) | + \bigg\{ \sup_{t \in I} |h'(t)| \bigg\} |(\sigma^{-1})'(\sigma \circ f(t)) - (\sigma^{-1})'(h(t)) |  \\
& \le & C \epsilon 
\end{eqnarray*}
for some constant $C$, since $\sup_{t \in I} |(\sigma^{-1})'( \sigma \circ f(t) )|$ is a constant and $ h'(t) $ is uniformly close to $(\sigma \circ f)'(t)$. The result follows since $\epsilon$ is arbitrary.
\end{proof}
By Lemma \ref{lemma_uni} and \ref{lemma_inv}, for any $\epsilon > 0$ and $k=2,\ldots,d$, we have that for all $x$ in some compact subset of $ \mathbb{R}^{d}$, there exists $\tilde{\theta}_k(\cdot) \in \mathbb{R}^{m_k}$ where $\tilde{\theta}_k(\cdot)$ depends continuously on $x^{(1)}, \ldots, x^{(k-1)}$ such that
$$ |\tilde{T}^{(k)}(x^{(1)}, \ldots, x^{(k)}) - S^{(k)}(x^{(k)}; \tilde{\theta}_k( x^{(1)}, \ldots, x^{(k-1)}))| < \epsilon / 2 ,$$
and
$$ |\nabla_k \tilde{T}^{(k)}(x^{(1)}, \ldots, x^{(k)}) - \nabla S^{(k)}(x^{(k)}; \tilde{\theta}_k( x^{(1)}, \ldots, x^{(k-1)}))| < \epsilon / 2 .$$
To see that $\tilde{\theta}_k(\cdot)$ can be chosen to depend continuously on $(x^{(1)}, \ldots, x^{(k-1)})$, we note that the approximation of $\nabla_k \tilde{T}^{(k)}(x^{(1)}, \ldots, x^{(k)}) $ (considered as a function of $x^{(k)}$) can be obtained through Riemann sums by construction (Lemma 3.1 of \cite{Nestoridis2007}), and by continuity of $\nabla_k \tilde{T}^{(k)}(x^{(1)}, \ldots, x^{(k)})$, the distance $\|\tilde{\theta}_k(x^{(1)}, \ldots, x^{(k-1)})  -\tilde{\theta}_k(y^{(1)}, \ldots, y^{(k-1)})\|$ can be made arbitrarily small if the distance $\|(x^{(1)}, \ldots, x^{(k-1)})' - (y^{(1)}, \ldots, y^{(k-1)})'\|$ is arbitrarily small. This also implies that the dimension of $\tilde{\theta}_k(\cdot)$ must be locally bounded. That is, for any $(x^{(1)}, \ldots, x^{(k-1)})$, there exists some $\delta$ such that the dimension of $\tilde{\theta}_k(y^{(1)}, \ldots, y^{(k-1)})$ is upper bounded if $\|(x^{(1)}, \ldots, x^{(k-1)})' - (y^{(1)}, \ldots, y^{(k-1)})'\| < \delta$. Now, for any compact subset $K \subset \mathbb{R}^{k-1}$, we can construct an open cover $\{U_{\alpha}\}_{\alpha}$ of $K$ where the dimension of $\tilde{\theta}_k(\cdot)$ is upper bounded on each $U_{\alpha}$. Then, there exists a finite subcover $\{U_{\alpha_j}\}_{j=1}^{J}$ of $K$, and therefore the dimension of $\tilde{\theta}_k(\cdot)$ is upper bounded on $K$. We note that requiring $w_i > 0$ does not cause additional difficulty since they can be made arbitrarily small.
\\\\
Now, by the universality of feedforward neural networks with sigmoid activation functions, for any $\delta >0$, we can find a conditional network $\hat{\theta}_k(x^{(1)}, \ldots, x^{(k-1)}; \vartheta_k)$ parameterized by $\vartheta_k$ such that
$$ \|\tilde{\theta}_k(x^{(1)}, \ldots, x^{(k-1)}) - \hat{\theta}_k(x^{(1)}, \ldots, x^{(k-1)}; \vartheta_k)\| < \delta . $$
Since both $S^{(k)}$ and $\nabla S^{(k)}$ have bounded derivatives in any compact inteval, they are uniformly continuous. Therefore, we can choose $\delta$ sufficiently small so that
$$ \|\tilde{\theta}_k(x^{(1)}, \ldots, x^{(k-1)}) - \hat{\theta}_k(x^{(1)}, \ldots, x^{(k-1)}; \vartheta_k)\| < \delta, $$
implies that
$$ |S^{(k)}(x^{(k)}; \tilde{\theta}_k(x^{(1)}, \ldots, x^{(k-1)})) - S^{(k)}(x^{(k)}; \hat{\theta}_k(x^{(1)}, \ldots, x^{(k-1)}; \vartheta_k))| < \epsilon / 2 ,$$
$$ |\nabla S^{(k)}(x^{(k)}; \tilde{\theta}_k(x^{(1)}, \ldots, x^{(k-1)})) - \nabla S^{(k)}(x^{(k)}; \hat{\theta}_k(x^{(1)}, \ldots, x^{(k-1)}; \vartheta_k))| < \epsilon / 2 .$$
By the triangle inequality we then have that
$$  |\tilde{T}^{(k)}(x^{(1)}, \ldots, x^{(k)}) - S^{(k)}(x^{(k)}; \hat{\theta}_k(x^{(1)}, \ldots, x^{(k-1)}; \vartheta_k))| < \epsilon,$$
$$  | \nabla_k \tilde{T}^{(k)}(x^{(1)}, \ldots, x^{(k)}) - \nabla S^{(k)}(x^{(k)}; \hat{\theta}_k(x^{(1)}, \ldots, x^{(k-1)}; \vartheta_k))| < \epsilon ,$$
uniformly for all $x$ a compact subset of $\mathbb{R}^{d}$. We conclude that the triangular maps constructed using neural autoregressive flows are dense in the space of continuous differentiable increasing triangular maps. 
\\\\
In particular, for any $\epsilon >0$ and any compact set $K$, there exists an increasing triangular map $T(\cdot)$ wherein the $k$th component of each map $T^{(k)}(\cdot)$ has the form $(\ref{uni_flow})$, and wherein the corresponding conditional networks are universal approximators (e.g., feedforward neural networks with sigmoid activation functions), such that 
$$ |T^{(k)}(x^{(1)}, \ldots, x^{(k)}) - \tilde{T}^{(k)}(x^{(1)}, \ldots, x^{(k)})| < \epsilon, $$
$$ |\nabla_k T^{(k)}(x^{(1)}, \ldots, x^{(k)}) - \nabla_k \tilde{T}^{(k)}(x^{(1)}, \ldots, x^{(k)})| < \epsilon,$$
for $k=2, \ldots, d$. Using similar arguments we obviously also have that
$$ |T^{(1)}(x^{(1)}) - \tilde{T}^{(1)}(x^{(1)})| < \epsilon, $$
$$ |\nabla_1 T^{(1)}(x^{(1)}) - \nabla_1 \tilde{T}^{(1)}(x^{(1)})| < \epsilon,$$
where $T^{(1)}(x^{(1)}) = S^{(1)}(x^{(1)}; \hat{\theta}_1)$.

Thanks to the triangular structure of the map, these imply that
$$ | \mbox{det} (\nabla T(x)) - \mbox{det} (\nabla \tilde{T}(x)) |,  $$ 
can be made arbitrarily small. By smoothness and boundedness of the target density $\eta(\cdot)$, we have that
$$  | \eta( T(x) ) - \eta( \tilde{T}(x) ) |,  $$
can also be made arbitrarily small. Thus, we have that
\begin{eqnarray*}
& & |\eta(T(x)) \mbox{det}(\nabla T(x)) - \eta( \tilde{T}(x)) \mbox{det}(\nabla \tilde{T}(x)) | \\
& \le&  \eta(T(x)) | \mbox{det}(\nabla T(x)) - \mbox{det}(\nabla \tilde{T}(x)) | + \mbox{det}(\nabla \tilde{T}(x)) | \eta( T(x) ) - \eta( \tilde{T}(x) ) |, 
\end{eqnarray*}
 where the right-hand-side of the above inequality is arbitrarily small as $\epsilon$ is made abitrarily small. This concludes the proof of the theorem.

\bibliographystyle{apalike}
\bibliography{refs}

\end{document}